\newcommand{\R}{\mathbb{R}}
\newcommand{\C}{\mathbb{C}}
\newcommand{\1}{\mathds{1}}
\newcommand{\x}{\times}
\renewcommand{\bar}{\overline}
\newcommand{\mycomment}[1]{}
\newtheorem{thm}{Theorem}
\newtheorem{lem}{Lemma}
\newtheorem{conj}{Conjecture}
\newtheorem{prop}{Proposition}
\begin{document}

\preprint{APS/123-QED}

\title{On The Stabilizer Formalism And Its Generalization}

\author{\'Eloi DESCAMPS$^{1,2}$, Borivoje DAKI\'C$^{3,4}$}
\affiliation{$^1$Département de Physique de l’\'Ecole Normale Supérieure - PSL, 45 rue d’Ulm, 75230 Paris, France}
\affiliation{$^2$Universitée Paris Cité, Laboratoire Matériaux et Phénomènes Quantiques, 75013 Paris, France}
\affiliation{$^3$University of Vienna, Faculty of Physics, Vienna Center for Quantum
Science and Technology, Boltzmanngasse 5, 1090 Vienna, Austria}
\affiliation{$^4$ Institute for Quantum Optics and Quantum Information (IQOQI) Vienna,
Austrian Academy of Sciences, Boltzmanngasse 3, A-1090 Vienna, Austria}

\date{\today}

\begin{abstract}
The standard stabilizer formalism provides a setting to show that quantum computation
restricted to operations within the Clifford group are classically efficiently simulable: this is the
content of the well-known Gottesman-Knill theorem \cite{gottesman_heisenberg_1998}. This work analyzes the mathematical
structure behind this theorem to find possible generalizations and derivation of constraints
 required for constructing a non-trivial generalized Clifford
group. We prove that if the closure of the stabilizing set is dense in the set of $SU(d)$ transformations, then the associated Clifford group is trivial, consisting only of local gates and permutations of subsystems. This result demonstrates the close relationship between the density of the stabilizing set and the simplicity of the corresponding Clifford group. We apply the analysis to investigate stabilization with binary observables for qubits and find that the formalism is equivalent to the standard stabilization for a low number of qubits.  Based on the observed patterns, we conjecture that a large
class of generalized stabilizer states are equivalent to the standard ones. Our results can be used
to construct novel Gottesman-Knill-type results and consequently draw a sharper line between
quantum and classical computation.
\end{abstract}

\maketitle

\section{Introduction}\label{section: intro}
Quantum computation uses quantum systems to perform calculations beyond the capabilities of classical (standard) computers \cite{nielsen_quantum_2010}. Many quantum algorithms solve problems seemingly intractable for classical computers. Prominent examples include Shor's factoring \cite{shor_polynomial-time_1997}, Grover search \cite{grover_fast_1996}, phase estimation \cite{kitaev_quantum_1995}, quantum simulation algorithms \cite{feynman_simulating_1982} etc. A common belief is that only some parts of quantum
computation possess a significant speedup. Thus, it is crucial to identify sets of gates that make quantum computation classically (in)tractable to
understand how this advantage arises. Some models do not bring (exponential) speedups, and this question has been extensively studied in
the literature \cite{anders_fast_2006,vidal_efficient_2003} prominently with the aid of the so-called
\textit{stabilizer formalism}. Based on this formalism, the Gottesman-Knill theorem
identifies a“non-trivial” portion of the quantum-mechanical
computation that can be efficiently simulated classically.
The theorem has a pedagogical value in showing
substantial differences and similarities between classical
and quantum computation. This theorem also saw few
generalizations obtained by varying different
parameters of the setting \cite{clark_generalised_2007,bermejo-vega_normalizer_2015} leading to a
better insights into the boundary between classical and
quantum models.

In this work, we continue this search for
possible generalizations of the Gottesman-Knill theorem
and the underlying stabilizer formalism. Our main
results are: a) we prove the constraint theorem for the generalized Clifford group,  b) we provide an exhaustive analysis of binary stabilization for two and three qubits, and based on observed patterns, c) we conjecture that binary stabilization for qubits is equivalent to the standard stabilizer formalism. The paper is organized as follows. In Section~\ref{section: stab_formalism} we discuss in more details the stabilizer formalism and the Gottesman-Knill theorem and then proceed by defining the generalized stabilizer states, stabilizer sets and Clifford groups. In Section~\ref{section: cliff_group}, we focus on infinite stabilization sets, specifically those that generate a dense set within the group of $SU(d)$ transformations. In such case, we derive a theorem that demonstrates that the corresponding Clifford group is trivial, composed solely of local gates and permutations of subsystems. This finding highlights the direct relationship between the density of the stabilization set and the triviality of the associated Clifford group. Lastly, in Section~\ref{section: gen_stab} we discuss the stabilization for qubits and binary operators. We provide a complete analysis for the two and three qubits cases and we show that in these cases the stabilizer formalism is equivalent to the standard one. Based on this, we conjecture that this equivalence extends to an arbitrary number of qubits.

\section{Towards  a generalization of the stabilizer formalism}\label{section: stab_formalism}
We start our analysis with the set of $N$ qudits residing in the Hilbert space $(\C^d)^{\otimes N}$. The basic idea of the stabilizer formalism is to fix a state $\ket{\psi}$ with the set of operators $O_1, \cdots,O_k$ such that:
\begin{itemize}
	\item $\ket{\psi}$ is a $+1$-eigenvector (with eigenvalue $+1$) for all $O_i$, and
	\item $\ket{\psi}$ is unique (up to a complex factor).	
\end{itemize}
These hypotheses allow for a full characterization of the state $\ket{\psi}$ by only specifying this list $O_1,\cdots,O_k$ of operators. We will consider (local) 
tensor operators $O_i\in \mathcal{A}^{\otimes N}$ with $\mathcal{A}\subset{U(d)}$, where $U(d)$ is the group of $d\times d$ unitary matrices. The set $\mathcal{A}$ is called the \textit{stabilizing set}, and any state $\ket{\psi}$ uniquely stabilized (\textit{i.e.}, satisfying the two conditions above) is a \textit{stabilizer state}.\\

We now introduce the \textit{generalized Clifford group} $\mathcal{C}_N(\mathcal{A})$ as follows
\begin{equation}\label{eq: def clif group}
	\mathcal{C}_N(\mathcal{A})=\{U\in U(d^N)|\forall O\in \mathcal{A}^{\otimes N},~UOU^{\dagger}\in\mathcal{A}^{\otimes N}\}.
\end{equation}
The definition ensures that $O\ket{\psi}$ is also a stabilizer state for any stabilizer states $\ket{\psi}$ and $O\in\mathcal C_N(\mathcal A)$.\\

A classic example of such stabilization setting is given by the set of Pauli matrices and this is the (standard) stabilizer formalism \cite{gottesman_stabilizer_1997}. In this case $\mathcal A= \mathcal{P}$, and

\begin{equation}
    \mathcal{P}=\{\pm 1,\pm i\}\cdot\{\1,\sigma_X,\sigma_Y,\sigma_Z\},
\end{equation}
where $\1$ is the identity and $\sigma_X$, $\sigma_Y$ and $\sigma_Z$ are the usual Pauli matrices. The corresponding Clifford group is:
\begin{equation}
	\mathcal{C}_N=\{U\in U(d^N)|\forall O\in \mathcal{P},~UOU^{\dagger}\in\mathcal{P}\}.
\end{equation}
This Clifford group is generated by two single-qubit gates and one two-qubit gate \cite{selinger_generators_2015}:
\begin{itemize}
	\item Hadamard gate: $H=\frac{1}{\sqrt{2}}\left(\begin{array}{cc} 1 & 1 \\ 1 & -1 \\\end{array}\right)$
	\item $\frac{\pi}{2}$-phase gate: $S= \left(\begin{array}{cc} 1 & 0 \\ 0 & i \\\end{array}\right)$
	\item Controlled phase gate $\Lambda Z =\left(\begin{array}{cccc} 1 & 0 & 0 & 0 \\ 0 & 1 & 0 & 0 \\0 & 0 & 1 & 0\\0 & 0 & 0 & -1\end{array}\right)$ or \item Controlled not gate $\Lambda X =\left(\begin{array}{cccc} 1 & 0 & 0 & 0 \\ 0 & 1 & 0 & 0 \\0 & 0 & 0 & 1\\0 & 0 & 1 & 0\end{array}\right)$
\end{itemize}
Computation involving gates from the Clifford group can be realised as a successive application of these three generators.~With  these definitions we can give a precise statement of the Gottesman-Knill theorem \cite{gottesman_heisenberg_1998}:

\begin{thm}\label{thm: gottesman knill} (Gottesman-Knill)
Computation utilizing only: 
\begin{itemize}
	\item Preparation of qubits in states of the computational basis,
	\item Quantum gates from the Clifford group $\mathcal{C}_N$, and
	\item Measurements in the computational basis,
\end{itemize}
can be simulated efficiently on a classical probabilistic computer.
\end{thm}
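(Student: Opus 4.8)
The plan is to work in the Heisenberg picture and to track, in place of the exponentially large state vector, the \emph{stabilizer group} of the current state — the abelian group of signed Pauli strings that fix $\ket{\psi}$. For an $N$-qubit stabilizer state this group is generated by $N$ independent Pauli operators, and each such generator is specified by $2N$ bits (recording which of $\1,\sigma_X,\sigma_Y,\sigma_Z$ sits at each site) together with one sign bit, so the whole ``stabilizer tableau'' occupies only $O(N^2)$ classical bits. The first step is the easy observation that the initial product state $\ket{0}^{\otimes N}$ is stabilized by $\{\sigma_Z^{(1)},\dots,\sigma_Z^{(N)}\}$, so the starting tableau is immediate; recall also that, by the defining assumptions of the formalism, this tableau pins down the state uniquely.

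The second step is to show that each admissible operation updates the tableau in polynomial time. For a Clifford gate $U$, every generator $g_i$ is sent to $Ug_iU^{\dagger}$, which by the definition of $\mathcal{C}_N$ in Eq.~\eqref{eq: def clif group} is again an element of $\mathcal{P}^{\otimes N}$, i.e.\ another signed Pauli string. Since it suffices to realize the generators $H$, $S$ and $\Lambda X$ (or $\Lambda Z$), each acting on one or two qubits, the conjugation rules form a fixed finite table (for instance $H\sigma_X H=\sigma_Z$, $S\sigma_X S^{\dagger}=\sigma_Y$, and $\Lambda X$ sending $\sigma_X\otimes\1\mapsto\sigma_X\otimes\sigma_X$ and $\1\otimes\sigma_Z\mapsto\sigma_Z\otimes\sigma_Z$), and applying it to all $N$ generators costs $O(N)$ per gate once the induced $\pm1$ phases are tracked. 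For a computational-basis measurement of qubit $j$ one distinguishes two cases: if $\sigma_Z^{(j)}$ commutes with every generator, then $\pm\sigma_Z^{(j)}$ lies in the stabilizer group, the outcome is deterministic, and its value is extracted by Gaussian elimination over $\mathbb{F}_2$ on the tableau in time $O(N^3)$; if $\sigma_Z^{(j)}$ anticommutes with some generator $g_a$, the outcome is an unbiased coin flip — this is the only place the classical \emph{probabilistic} computer is used — and one updates by replacing $g_a$ with $\pm\sigma_Z^{(j)}$ (the sign being the observed value) and multiplying every other anticommuting generator by $g_a$, all in $O(N^2)$.

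The third step merely assembles these pieces: a circuit consisting of $\mathrm{poly}(N)$ state preparations, Clifford gates and measurements is simulated by $\mathrm{poly}(N)$ tableau updates, each of cost at most $O(N^3)$, giving an overall polynomial-time classical probabilistic simulation, which is exactly the assertion of the theorem.

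I expect the main obstacle to be \emph{correctness}, not efficiency, of the measurement rule: one must prove that when $\sigma_Z^{(j)}$ anticommutes with a generator the two outcomes genuinely occur with probability $1/2$ each and that the prescribed replacement really does yield a valid stabilizer tableau for the correct post-measurement state, and, in the commuting sub-case, that the sign produced by the $\mathbb{F}_2$ elimination is the true measurement outcome. A secondary but notoriously error-prone point is the consistent propagation of the $\{\pm1,\pm i\}$ phases in $\mathcal{P}$ under Clifford conjugation, since a single sign mistake corrupts the remainder of the run; this is handled by carrying the extra sign bit in the tableau and verifying the conjugation table once and for all on the three generators $H$, $S$ and $\Lambda X$.
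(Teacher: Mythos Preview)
Your proposal is a correct sketch of the standard stabilizer-tableau proof of Gottesman--Knill. The paper does not actually prove this theorem: it is stated as a known result, with the remark that ``the proof is mainly based on the stabilizer formalism'' and a citation to \cite{nielsen_quantum_2010,gottesman_heisenberg_1998}. Your outline is precisely that standard argument (Heisenberg-picture tracking of $N$ Pauli generators, $O(N^2)$ tableau, conjugation by $H,S,\Lambda X$ via a fixed lookup table, and the commuting/anticommuting case split for measurements), so there is nothing to compare---you have reproduced the proof the paper is implicitly deferring to.
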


The proof  is mainly based on the stabilizer formalism \cite{nielsen_quantum_2010}, suggesting that the general definition we gave for the stabilization may lead to a generalized version of the Gottesman-Knill theorem. However, it may happen that the set $\mathcal C_N(\mathcal A)$ defined in eq. (\ref{eq: def clif group}),
is composed of local gates only, reducing the model to computation by local gates only. In such situation, the corresponding generalized theorem would be trivially true. Thus, two interesting questions arise in this respect:
\begin{itemize}
	\item For which set $\mathcal A$ is the corresponding Clifford group not trivial?
	\item What kind of states can be stabilized by different sets $\mathcal{A}$?
\end{itemize}
What is meant by a trivial Clifford group here is a
group contained in $U(d)^{\otimes N}P_N(d)$, where $P_N(d)$ is the set of permutation gates on  $N$ qudits, with elements $P_\sigma$ acting on the basis states in the following way
\begin{equation}
    P_\sigma\ket{i_1,\cdots,i_N}=\ket{i_{\sigma(1)},\cdots,i_{\sigma(N)}},
\end{equation}
for any permutation $\sigma$ of $N$ elements. In other words, a trivial Clifford group is composed only of local and permutation gates. Such a group does not generate entanglement and thus the computation involving only gates from it is considered trivial.

\section{Generalized Clifford Group}\label{section: cliff_group}
We shall try to find examples of $\mathcal{C}_N(\mathcal A)$ being non-trivial, this means containing entangling gates.  This problem was also studied by \cite{clark_generalised_2007} where the authors studied the Clifford group with $\mathcal{A}$ being a finite group acting irreducibly on $\C^d$. Here we consider a more general case of infinite groups being involved. We cover a first step in this direction by considering the stabilizing set $\mathcal A$ to generate a group $\langle \mathcal{A}\rangle$ dense inside $SU(d)$, \textit{i.e.}, $SU(d)\subset \overline{\langle \mathcal A\rangle}$, where $\overline O$ denote the topological closure of the set $O$.\\
Under these assumptions we show a series of results leading to a new necessary condition for a generalized Clifford group to be non-trivial. We show the following result.
\begin{prop}\label{pro: adherence clifford}
    If the set $\langle\mathcal A\rangle$ is dense inside $SU(d)$, then for all integers $N$, the Clifford group $\mathcal{C}_N(\mathcal{A})$ verifies $\mathcal{C}_N(\mathcal{A})\subset \mathcal{C}_N(U(d))$.
 \end{prop}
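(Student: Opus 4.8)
The plan is to analyse the closed subgroup $G_N:=\overline{\langle\mathcal{A}^{\otimes N}\rangle}$ generated inside $U(d^N)$ by the set $\mathcal{A}^{\otimes N}$, and to establish two things: that every $U\in\mathcal{C}_N(\mathcal{A})$ normalizes $G_N$, and that $G_N$ is squeezed between $SU(d)^{\otimes N}$ and $U(d)^{\otimes N}$. Granting both, one finishes as follows: an arbitrary $u_1\otimes\cdots\otimes u_N\in U(d)^{\otimes N}$ can be written as $e^{i\theta}(v_1\otimes\cdots\otimes v_N)$ with $v_i\in SU(d)$, and since the global phase is central in $U(d^N)$,
\begin{equation*}
U(u_1\otimes\cdots\otimes u_N)U^{\dagger}=e^{i\theta}\,U(v_1\otimes\cdots\otimes v_N)U^{\dagger}\in e^{i\theta}G_N\subseteq U(d)^{\otimes N},
\end{equation*}
so $U\in\mathcal{C}_N(U(d))$. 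I will assume $d\geq2$, the case $d=1$ being vacuous.

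The normalization statement is soft: conjugation by a fixed $U$ is simultaneously a group automorphism and a homeomorphism of $U(d^N)$, so $U\mathcal{A}^{\otimes N}U^{\dagger}\subseteq\mathcal{A}^{\otimes N}$ propagates to $U\langle\mathcal{A}^{\otimes N}\rangle U^{\dagger}\subseteq\langle\mathcal{A}^{\otimes N}\rangle$ and then, upon taking closures, to $UG_NU^{\dagger}\subseteq G_N$. The upper bound $G_N\subseteq U(d)^{\otimes N}$ is also immediate, since $\langle\mathcal{A}^{\otimes N}\rangle\subseteq U(d)^{\otimes N}$ and $U(d)^{\otimes N}$, being the continuous image of the compact group $U(d)^{N}$, is closed.

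The core of the proof, and the step I expect to be the real obstacle, is the lower bound $SU(d)^{\otimes N}\subseteq G_N$: one must show that the local tensor operators, which always act on all $N$ factors at once, nevertheless generate a dense subgroup of the full local group. The mechanism is to decouple a single slot by a commutator. Fix $j\in\{1,\dots,N\}$ and any $c\in\mathcal{A}$; for $a,b\in\mathcal{A}$, take the two elements of $\mathcal{A}^{\otimes N}$ carrying $a$, respectively $b$, on slot $j$ and $c$ on every other slot, and form their group commutator. Since $[c,c]=\1$, the result is the operator acting as $[a,b]$ on slot $j$ and as $\1$ everywhere else. Thus $K_j:=\{g\in U(d):\ \1^{\otimes(j-1)}\otimes g\otimes\1^{\otimes(N-j)}\in\langle\mathcal{A}^{\otimes N}\rangle\}$ is a subgroup of $SU(d)$ containing every commutator $[a,b]$ with $a,b\in\mathcal{A}$, and conjugating inside $\langle\mathcal{A}^{\otimes N}\rangle$ by the generator that carries $a\in\mathcal{A}$ on slot $j$ and $c$ elsewhere shows that $K_j$ is normalized by $\mathcal{A}$, hence by $\langle\mathcal{A}\rangle$. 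Consequently $\overline{K_j}$ is a closed subgroup of $SU(d)$ normalized by the dense subgroup $\langle\mathcal{A}\rangle$, hence normal in $SU(d)$.

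It then remains to rule out $\overline{K_j}\subseteq Z(SU(d))$. If every $[a,b]$ with $a,b\in\mathcal{A}$ were central, then the image of $\langle\mathcal{A}\rangle$ in $PSU(d)$ would be abelian; but that image is dense, so $PSU(d)$ would be abelian, contradicting $d\geq2$. Hence some $[a,b]\in\overline{K_j}$ is noncentral, so $\overline{K_j}$ is a closed normal subgroup of $SU(d)$ not contained in its center; by simplicity of $PSU(d)$ (and connectedness of $SU(d)$, which excludes proper closed finite-index subgroups) this forces $\overline{K_j}=SU(d)$. Therefore $\1^{\otimes(j-1)}\otimes SU(d)\otimes\1^{\otimes(N-j)}\subseteq G_N$ for each $j$, and multiplying these subgroups over $j=1,\dots,N$ (using that $G_N$ is a group) yields $SU(d)^{\otimes N}\subseteq G_N$, which completes the argument.
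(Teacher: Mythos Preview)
Your proof is correct and takes a genuinely different route from the paper's. Both arguments ultimately invoke the simplicity of $SU(d)$ to manufacture single-slot local unitaries, but the decoupling mechanisms differ. The paper works with the subgroup $G=\{u_1\otimes\cdots\otimes u_N: U(u_1\otimes\cdots\otimes u_N)U^\dagger\in U(d)^{\otimes N}\}$ attached to a fixed $U$, observes that $G$ contains the diagonal $\{u^{\otimes N}:u\in U(d)\}$, sandwiches $A_1^{\otimes(N-1)}\otimes A_2$ between $u^{\otimes N}$ and $w^{\otimes N}$ with $w=(uA_2)^{-1}$ to obtain elements of the form $h^{\otimes(N-1)}\otimes\1$, and then recurses on $N$ to reach $U(d)\otimes\1^{\otimes(N-1)}\subset G$. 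Your commutator identity $[\,a\otimes c^{\otimes(N-1)},\,b\otimes c^{\otimes(N-1)}\,]=[a,b]\otimes\1^{\otimes(N-1)}$ kills the off-slot factors in one shot, dispensing with both the diagonal trick and the recursion; and by working with $G_N=\overline{\langle\mathcal{A}^{\otimes N}\rangle}$ rather than a $U$-dependent object, you cleanly separate the structural fact $SU(d)^{\otimes N}\subseteq G_N\subseteq U(d)^{\otimes N}$ from the soft normalization property of $U$.

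One small slip to patch: $K_j$ need not lie in $SU(d)$. If, for instance, $\mathcal{A}$ contains a nontrivial scalar $e^{i\theta}\1$, then $(e^{i\theta}\1)^{\otimes N}=(e^{iN\theta}\1)\otimes\1^{\otimes(N-1)}$ places $e^{iN\theta}\1\in K_1$, generically outside $SU(d)$. The remedy is immediate: apply the simplicity argument to $\overline{K_j}\cap SU(d)$ instead. This intersection is closed, normal in $SU(d)$ (conjugation preserves determinants, and $\overline{K_j}$ is normalized by $\overline{\langle\mathcal{A}\rangle}\supseteq SU(d)$), and contains your noncentral commutator $[a,b]$, so it equals $SU(d)$; hence $SU(d)\subseteq\overline{K_j}$ and the remainder of your argument is unchanged.
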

\begin{proof}[\underline{Proof}]
See Appendix (\ref{annex: proof pro 1})
\end{proof}

Next, we show that a gate from $\mathcal{C}_N(U(d))$ is non-entangling, which means that it maps product-states to product-states.

\begin{prop}\label{pro: non entangling}
 If $U\in\mathcal{C}_N(U(d))$, then for all product-state $\ket{\psi}=\ket{\varphi_1}\cdots\ket{\varphi_N}$, we have $U\ket{\psi}=\ket{\tilde{\varphi_1}}\cdots\ket{\tilde{\varphi_N}}$ for some states $\ket{\tilde{\varphi_1}},\cdots,\ket{\tilde{\varphi_N}}$.
\end{prop}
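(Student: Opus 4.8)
The plan is to exploit the defining property of $\mathcal{C}_N(U(d))$: a unitary $U$ lies in this group iff conjugation by $U$ preserves the set $U(d)^{\otimes N}$ of product unitaries. I would start from a product state $\ket{\psi}=\ket{\varphi_1}\cdots\ket{\varphi_N}$ and, for each subsystem $j$, pick a single-qubit (single-qudit) unitary $V_j\in U(d)$ that stabilizes $\ket{\varphi_j}$, i.e. $V_j\ket{\varphi_j}=\ket{\varphi_j}$ with $\ket{\varphi_j}$ the \emph{unique} eigenvector of $V_j$ with eigenvalue $1$ — for instance take $V_j$ to have eigenvalue $1$ on $\ket{\varphi_j}$ and distinct eigenvalues $\neq 1$ on an orthonormal completion. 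Then the operator $O=V_1\otimes\cdots\otimes V_N\in U(d)^{\otimes N}$ has $\ket{\psi}$ as its unique (up to scalar) $+1$-eigenvector; in other words $\ket{\psi}$ is the stabilizer state determined by $O$ (together with further such operators if one wants the full stabilizer, but a single generic $O$ already pins down the state).

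The key step is then: since $U\in\mathcal{C}_N(U(d))$, the operator $UOU^\dagger$ again lies in $U(d)^{\otimes N}$, say $UOU^\dagger = W_1\otimes\cdots\otimes W_N$ with $W_j\in U(d)$. Now $U\ket{\psi}$ is a $+1$-eigenvector of $UOU^\dagger$: indeed $UOU^\dagger (U\ket{\psi}) = UO\ket{\psi}=U\ket{\psi}$. Moreover it is the \emph{unique} such eigenvector up to scalar, because conjugation by the unitary $U$ preserves eigenspace dimensions, so the $+1$-eigenspace of $W_1\otimes\cdots\otimes W_N$ is one-dimensional. But the $+1$-eigenspace of a product unitary $W_1\otimes\cdots\otimes W_N$ is the tensor product of the individual $+1$-eigenspaces (or is forced to be, once it is one-dimensional: each $W_j$ must have $1$ as an eigenvalue with the product of multiplicities equal to $1$, hence each multiplicity is exactly $1$), so it is spanned by a product vector $\ket{\tilde\varphi_1}\otimes\cdots\otimes\ket{\tilde\varphi_N}$. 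Therefore $U\ket{\psi}$ is proportional to this product vector, which is what we wanted.

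The main obstacle — and the point requiring a little care — is the eigenspace bookkeeping: I must ensure that the chosen $O$ really has a one-dimensional $+1$-eigenspace (so that ``uniqueness'' transfers), and then argue correctly that a product unitary whose $+1$-eigenspace is one-dimensional necessarily has that eigenspace spanned by a product vector. This last fact follows from the elementary observation that $\ker(W_1\otimes\cdots\otimes W_N - \1) \supseteq \ker(W_1-\1)\otimes\cdots\otimes\ker(W_N-\1)$, that the eigenvalues of a tensor product are products of eigenvalues, and that if the left side has dimension $1$ then each $\ker(W_j-\1)$ is nonzero (otherwise $1$ is not an eigenvalue of the product at all, contradicting that $U\ket\psi$ is a $+1$-eigenvector) and their dimensions multiply to $1$, forcing each to be exactly $1$-dimensional. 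A secondary, purely cosmetic point is that one should phrase the construction of $V_j$ so that it manifestly works for any $\ket{\varphi_j}$ (e.g. $V_j = \ket{\varphi_j}\bra{\varphi_j} + e^{i\theta_j}(\1-\ket{\varphi_j}\bra{\varphi_j})$ with $\theta_j\not\equiv 0$), after which the argument is routine.
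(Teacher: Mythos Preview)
Your approach is the same as the paper's: build a product unitary $O$ that has $\ket{\psi}$ as a simple eigenvector, conjugate by $U$, and use that the conjugate is again a product unitary with the same spectrum to conclude that $U\ket{\psi}$ is a product vector. Two details in your writeup are not quite right, however.

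First, merely requiring each $V_j$ to have $1$ as a simple eigenvalue (or taking $\theta_j\not\equiv 0$ in your explicit formula) does \emph{not} guarantee that the $+1$-eigenspace of $O=V_1\otimes\cdots\otimes V_N$ is one-dimensional: products of the non-unit eigenvalues can still equal $1$ (e.g.\ $\theta_1=\theta_2=\pi$ for $N=2$, $d=2$). You need a genuine arithmetic condition on the $\theta_j$; the paper sidesteps this by giving the $k$-th factor the eigenvalues $e^{i(l-1)\pi/d^k}$ ($l=1,\dots,d$), so that by a base-$d$ argument \emph{all} $d^N$ eigenvalues of $O$ are distinct. Second, your claim that in the factorization $UOU^\dagger=W_1\otimes\cdots\otimes W_N$ each $W_j$ must have $1$ as an eigenvalue is false: for instance $W_1=\operatorname{diag}(e^{i\alpha},\ast)$ and $W_2=\operatorname{diag}(e^{-i\alpha},\ast)$ already give $1$ as a product eigenvalue with multiplicity one without either factor having eigenvalue $1$. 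The correct (and shorter) argument is simply that a tensor product of unitaries admits an eigenbasis of product vectors, so any one-dimensional eigenspace is automatically spanned by a product vector; this is exactly how the paper concludes.
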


\begin{proof}[\underline{Proof}]
See Appendix (\ref{annex: proof pro 2})
\end{proof}
 An equivalent result can be obtained for product ``\textit{bra}" (dual) states. This is simply obtained by varying the proof and looking at the left eigenvectors. We thus have that all $U\in \mathcal C_N(\mathcal A)$ stabilize the set of product linear forms, \textit{i.e.}, $\bra{\psi}=\bra{\varphi_1}\cdots\bra{\varphi_N}$, and $\bra{\psi}U=\bra{\tilde{\varphi_1}}\cdots\bra{\tilde{\varphi_N}}$.\\
With this, we can show an already known result \cite{brylinski_universal_2001}, of factorizing non-entangling gates into a tensor product of local gates up to a permutation gate. However, in comparison to the proof provided in \cite{brylinski_universal_2001} our proof employs elementary mathematics and provides the result in full generality (for arbitrary $d$ and $N$).

\begin{prop}\label{pro: non entangling are trivial}
The non-entangling set $\mathcal{C}_N(U(d))$ acting on N qudits is composed of trivial gates , \textit{i.e.},
\begin{equation}
    \mathcal C_N(U(d))=U(d)^{\otimes N}P_N(d).
\end{equation}
\end{prop}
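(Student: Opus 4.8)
The plan is to show that every $U\in\mathcal{C}_N(U(d))$ is a permutation of the $N$ registers composed with single-register unitaries. By Proposition~\ref{pro: non entangling}, $U$ maps the set $\Sigma$ of fully product states into itself; by the ``bra'' version of that proposition, $U^{\dagger}$ does as well, so (being invertible) $U$ maps $\Sigma$ bijectively onto $\Sigma$. Thus $U$ is a linear automorphism of $(\C^{d})^{\otimes N}$ carrying the product vectors onto themselves, and the goal is to prove it is one of the obvious such maps.

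The key ingredient is a lemma on subspaces: if $W\subseteq(\C^{d})^{\otimes N}$ is a linear subspace of dimension $d$ all of whose nonzero vectors are fully product, then $W$ is a \emph{coordinate line}, i.e.\ there is a register $j_{0}$ and unit vectors $\ket{\chi_{j}}$ ($j\ne j_{0}$) such that $W$ consists exactly of the vectors whose value on register $j$ is $\ket{\chi_{j}}$ for all $j\ne j_{0}$. I would first prove the bipartite case: a subspace of $\mathcal{H}_{A}\otimes\mathcal{H}_{B}$ consisting only of product vectors lies in $\ket{\phi}\otimes\mathcal{H}_{B}$ or in $\mathcal{H}_{A}\otimes\ket{\chi}$, since otherwise it would contain $\ket{a_{1}}\ket{b_{1}}$ and $\ket{a_{2}}\ket{b_{2}}$ with both pairs linearly independent, whose sum has Schmidt rank $2$; a short case analysis disposes of the remaining possibilities. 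Then I induct on $N$ using the cut $1\,|\,2\cdots N$: in the first alternative one recurses on the last $N-1$ registers (the dimension $d$ is maximal, so the recursion fills one register completely), and in the second $W=\mathcal{H}_{1}\otimes\ket{\chi}$ forces $\ket{\chi}$ to be fully product.

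Applying the lemma to $W=U(L)$ for $L$ any coordinate line of the domain (its image is $d$-dimensional and consists of product vectors) shows $U$ maps coordinate lines to coordinate lines. For fixed register $j$, the direction-$j$ coordinate lines are parametrized by the connected set of product states of the remaining registers, and the direction of $U(L)$ is a locally constant function of $L$ --- coordinate lines of different directions form disjoint closed subsets of the Grassmannian, distinguished by the rank pattern of the single-register reduced states --- hence constant. This gives a well-defined $\sigma\colon j\mapsto\sigma(j)$; running the same argument for $U^{\dagger}$ produces its inverse, so $\sigma\in S_{N}$. Replacing $U$ by $P_{\sigma}^{-1}U$ (still product-preserving, now mapping every direction-$j$ line to a direction-$j$ line) reduces to $\sigma=\mathrm{id}$.

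In that case $U$ sends $\C^{d}\otimes\ket{a}$ (for $\ket{a}$ a computational basis vector of the last $N-1$ registers) to a direction-$1$ line $\C^{d}\otimes\ket{b(a)}$, so $U(\ket{\varphi}\otimes\ket{a})=(A_{a}\ket{\varphi})\otimes\ket{b(a)}$ with $A_{a}$ unitary on $\C^{d}$, and orthogonality of the images of the orthogonal inputs $\ket{\varphi}\otimes\ket{a}$, $\ket{\varphi}\otimes\ket{a'}$ forces $\ket{b(a)}\perp\ket{b(a')}$ for $a\ne a'$. Taking a product state $\ket{\chi}=\bigotimes_{k\ge 2}\ket{\chi_{k}}$ of the last $N-1$ registers with all computational-basis coefficients $\chi_{a}$ nonzero, the identity $U(\ket{\varphi}\otimes\ket{\chi})=\sum_{a}\chi_{a}(A_{a}\ket{\varphi})\otimes\ket{b(a)}$ must be product across the first cut while the $\ket{b(a)}$ are orthonormal, so all $A_{a}\ket{\varphi}$ are parallel for every $\ket{\varphi}$; hence the $A_{a}$ are mutually proportional, and absorbing the scalars into the $\ket{b(a)}$ gives $U=A_{1}\otimes B$ with $B$ unitary on the last $N-1$ registers, again product- and direction-preserving. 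Induction on $N$ (trivial for $N=1$) then yields $U=A_{1}\otimes\cdots\otimes A_{N}$, and restoring the permutation gives $\mathcal{C}_{N}(U(d))\subseteq U(d)^{\otimes N}P_{N}(d)$; the reverse inclusion is immediate since conjugation by local or permutation gates preserves $U(d)^{\otimes N}$. I expect the delicate points to be the global consistency of $\sigma$ (handled by the connectedness argument) and keeping the projective phase ambiguities under control so that the per-line unitaries assemble into an honest tensor product.
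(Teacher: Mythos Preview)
Your argument is correct and takes a genuinely different route from the paper. The paper's proof is an explicit block-matrix analysis: it first invokes a technical lemma (Lemma~\ref{lemma: hard}) to conjugate $U$ by local unitaries so that every matrix entry is nonzero, then writes $U$ in column- and row-block forms, feeds carefully chosen product and entangled computational-basis vectors through $U$ to obtain proportionality and non-proportionality relations among the blocks, and uses a small collinearity lemma (Lemma~\ref{lemma: simple}) together with judicious swap gates to promote these to proportionality of the large $d^{N-1}\times d^{N-1}$ blocks, giving $U=V\otimes W$ and recursing on $W$. Your approach is instead subspace-theoretic: you classify the $d$-dimensional fully-product subspaces as coordinate lines, observe that $U$ must permute them, and extract a global permutation $\sigma$ by a connectedness argument on the parameter space of direction-$j$ lines; once $\sigma$ is removed, the tensor factorization falls out directly from the action on direction-$1$ lines and the orthonormality of their images. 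Your route bypasses the polynomial construction of Lemma~\ref{lemma: hard} entirely at the price of one mild topological step (continuity of the direction map into the disjoint closed direction-strata of the Grassmannian), and is closer in spirit to the Segre-variety style arguments the paper alludes to in citing~\cite{brylinski_universal_2001}; the paper's route stays within elementary linear algebra but needs more index bookkeeping and the nontrivial auxiliary Lemma~\ref{lemma: hard}.
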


\begin{proof}[\underline{Proof}]
See Appendix (\ref{annex: proof prop 3}).
\end{proof}

By combining these intermediary results, we arrive to the following theorem:
\begin{thm}\label{thm: trivial clifford group}
    Let $\mathcal{A} \subset U(d)$ such that, the group $\langle \mathcal{A}\rangle$ generated by $\mathcal{A}$ is dense inside $SU(d)$. Then for all integers $N$, the Clifford group of order $N$, $\mathcal{C}_N(\mathcal{A})$ is trivial, \textit{i.e.}, verifies $\mathcal{C}_N(\mathcal{A})\subset U(d)^{\otimes N}P_N(d)$.
\end{thm}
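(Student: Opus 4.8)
The plan is to obtain Theorem~\ref{thm: trivial clifford group} by simply chaining the three propositions already proved, so that at the level of the theorem itself there is essentially nothing left to do. Explicitly, Proposition~\ref{pro: adherence clifford} gives $\mathcal{C}_N(\mathcal{A})\subset\mathcal{C}_N(U(d))$ under the density hypothesis, and Proposition~\ref{pro: non entangling are trivial} (with Proposition~\ref{pro: non entangling} feeding into it) gives $\mathcal{C}_N(U(d))=U(d)^{\otimes N}P_N(d)$; composing the inclusion with the equality yields $\mathcal{C}_N(\mathcal{A})\subset\mathcal{C}_N(U(d))=U(d)^{\otimes N}P_N(d)$, which is exactly the definition of a trivial Clifford group. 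All the work therefore sits in the three lemmas, whose proofs I would organize as follows.

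For Proposition~\ref{pro: adherence clifford}: conjugation by a fixed unitary is a continuous automorphism of $U(d^N)$, hence a homeomorphism that respects products, so if $U\in\mathcal{C}_N(\mathcal{A})$ then $U\langle\mathcal{A}^{\otimes N}\rangle U^\dagger\subset\langle\mathcal{A}^{\otimes N}\rangle$ and, taking closures, $U\,\overline{\langle\mathcal{A}^{\otimes N}\rangle}\,U^\dagger\subset\overline{\langle\mathcal{A}^{\otimes N}\rangle}$. Since $\langle\mathcal{A}\rangle$ is dense in $SU(d)$ and tensoring is continuous, $\overline{\langle\mathcal{A}^{\otimes N}\rangle}$ contains every $h_1\otimes\cdots\otimes h_N$ with $h_i\in SU(d)$, while it sits inside the closed set of $N$-fold product unitaries; an arbitrary product unitary differs from one of this form only by a global phase, which conjugation ignores, so the inclusion above upgrades to $U\,U(d)^{\otimes N}\,U^\dagger\subset U(d)^{\otimes N}$, i.e.\ $U\in\mathcal{C}_N(U(d))$. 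For Proposition~\ref{pro: non entangling}: choose on each factor a unitary $V_i$ with distinct eigenvalues having $\ket{\varphi_i}$ as an eigenvector, arranged so that the eigenvalue of $V_1\otimes\cdots\otimes V_N$ on $\ket{\varphi_1}\cdots\ket{\varphi_N}$ is nondegenerate; conjugating by $U\in\mathcal{C}_N(U(d))$ keeps us in $U(d)^{\otimes N}$, and the (still one-dimensional) eigenspace of the conjugated operator, being a tensor product of eigenlines of the local factors, is spanned by a product vector, which forces $U\ket{\varphi_1}\cdots\ket{\varphi_N}$ to be a product state; the bra version is the transpose of this. For Proposition~\ref{pro: non entangling are trivial}: feed in both the ket and the bra statements and run a combinatorial rigidity argument on the computational basis, showing that a unitary sending every product basis vector to a product vector and every product basis covector to a product covector must carry a product basis onto a product basis in a ``rectangular'' way, i.e.\ $U=(W_1\otimes\cdots\otimes W_N)P_\sigma$.

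The main obstacle is not the theorem but Proposition~\ref{pro: non entangling are trivial}: turning ``preserves all product states and all product linear forms'' into the rigid tensor-product-times-permutation normal form, uniformly in $d$ and $N$, requires a genuine combinatorial argument about how the supports factorize, and a careless version of it can easily lose the permutation degree of freedom or stall for $d>2$. By comparison the density step (Proposition~\ref{pro: adherence clifford}) is the cleanest, using only continuity of conjugation, continuity of the tensor product, and closedness of the set of product unitaries.
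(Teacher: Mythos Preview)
At the level of the theorem your argument is exactly the paper's: once Propositions~\ref{pro: adherence clifford},~\ref{pro: non entangling} and~\ref{pro: non entangling are trivial} are in hand, the theorem is the single chain $\mathcal C_N(\mathcal A)\subset\mathcal C_N(U(d))=U(d)^{\otimes N}P_N(d)$, with nothing further to prove.

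Your sketch of Proposition~\ref{pro: adherence clifford}, however, hides a genuine gap. You claim that density of $\langle\mathcal A\rangle$ in $SU(d)$ together with continuity of the tensor map forces $SU(d)^{\otimes N}\subset\overline{\langle\mathcal A^{\otimes N}\rangle}$. Continuity only gives $SU(d)^{\otimes N}\subset\overline{\langle\mathcal A\rangle^{\otimes N}}$, and the inclusion between $\langle\mathcal A^{\otimes N}\rangle$ and $\langle\mathcal A\rangle^{\otimes N}$ runs the wrong way: every element of $\langle\mathcal A^{\otimes N}\rangle$ is a tensor product of words in $\mathcal A\cup\mathcal A^{-1}$ sharing the \emph{same} sign pattern, so for instance $a\otimes\1$ need not lie in $\langle\mathcal A^{\otimes 2}\rangle$ when $\1\notin\mathcal A$. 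What one does obtain directly is the diagonal $\{g\otimes\cdots\otimes g:g\in SU(d)\}$ together with elements of the form $(ab^{-1})\otimes\1\otimes\cdots\otimes\1$ for $a,b\in\mathcal A$; to pass from these to all of $SU(d)^{\otimes N}$ one must use that $SU(d)$ is a simple Lie group, so that the closed normal subgroup generated by the non-central $ab^{-1}$ is everything. This is precisely the mechanism the paper invokes in its proof, so the density step is not as soft as ``continuity of conjugation, continuity of the tensor product, and closedness of the set of product unitaries''. Your outlines of Propositions~\ref{pro: non entangling} and~\ref{pro: non entangling are trivial} are in line with the paper's arguments, and you are right that the combinatorial work sits almost entirely in the last one.
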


Theorem~\ref{thm: trivial clifford group} simply means that if the set $\mathcal{A}$ generates a group which is too big, then its corresponding generalized Clifford group $\mathcal{C}_N(\mathcal{A})$ is only made of trivial gates. In this situation, a computation starting with a product-state will keep a separable form throughout the application of gates from the Clifford group. Given this, we can keep track of each qudit individually and perform an efficient classical simulation of the computation. This result gives a non-trivial constraint on the stabilizing set $\mathcal{A}$, which is needed to generalize the stabilizer formalism.

\section{Exploring generalized stabilization for qubits}\label{section: gen_stab}
In the previous section we found some constraints on the stabilizing sets $\mathcal A$. However, the Gottesman-Knill theorem does not only rely on the fact that the standard Clifford group $\mathcal C_N$ is not trivial but also that the set of all Pauli-stabilizer states (\textit{i.e.}, state of the standard stabilizer formalism) posses a specific and rich structure. Thus, when analyzing a potential generalized stabilizer setting, an important question to address is determining the specific nature ({\it e.g.}, entanglement structure) of the stabilizer states.
 We focus our attention to the case of qubits ($d = 2$). To arrive at novel structures, it is thus important 
to understand when a set $\mathcal{A}$ stabilizes states
that are not locally equivalent to Pauli-stabilized ones.
 We say that two $N$-qubits states $\ket{\psi}$ and $\ket{\phi}$ are locally equivalent if there exist a local gate $U=U_1\otimes\cdots\otimes U_N$ such that $\ket{\psi}=U\ket{\phi}$ with $U_i\in U(2)$.

\subsection{Binary operator case}\label{section: binary case}
The problem in its full generality ($\mathcal A$ composed of arbitrary unitary matrices) is hard, we will thus consider a simpler case of binary observables. This choice comes naturally, as the Pauli matrices generating the Pauli group (standard stabilizer formalism), are specific instance of binary operators, \textit{i.e.}, operators of the form $\va{\sigma}\cdot \va{n}$ where $\va{\sigma}=(\sigma_X,\sigma_Y,\sigma_Z)$ and $\va{n}$ is a unit-norm vector. We first take a look at the stabilization problem when the elements of $\mathcal A$ are general binary operators.\\

Recall that a necessary condition to have stabilization by a list of operators, is to have for each pair of stabilizers common $+1$-eigenvectors. We thus investigate this pairwise condition in the case of binary observables $\va{\sigma}\cdot\va{n}$. Since stabilized sets are equivalent up to local unitary, without loss of generality, we set the first stabilizing operator to be $\sigma_Z\otimes\cdots\otimes\sigma_Z$ and the second one can to be composed of operators in the $(XZ)$-plane (\textit{i.e.}, operators of the form $\cos(\theta)\sigma_Z+\sin(\theta)\sigma_X$). To begin with, we introduce some useful notation.\\

\underline{Notation:}
\begin{itemize}
	\item For real angles $\theta$ and $\phi$ a general binary operator on the Bloch sphere is defined as
 \begin{equation}\label{eq: def opera A theta phi}
     A_{\theta,\phi}=\cos(\theta)\sigma_Z+\sin(\theta)\Big(\cos(\phi)\sigma_X+\sin(\phi)\sigma_Y\Big)
 \end{equation}
	\item For operators in the $(XZ)$-plane we set $A_\theta=A_{\theta,0}$.
	\item For $n$ binary operators $A_1$,...,$A_n$ the projector on the $+1$-eigenspace is denoted as: $P_{A_1\otimes\cdots\otimes A_n}=\frac{1}{2}(\1+A_1\otimes\cdots\otimes A_n)$
	\item For operators in the $(XZ)$-plane, $A_{\theta_1}$,...,$A_{\theta_n}$ we set  notation $P_{A_{\theta_1}\otimes\cdots\otimes A_{\theta_n}}=P_{\theta_1,\dots,\theta_n}$
	\item For indices $j_1,\dots,j_n\in\{-1,+1\}$, we define the unormalized states:
    \begin{equation}
        \label{equstate}
        \ket{\psi_{j_1,\dots,j_n}}=\operatorname{Re}\left[\sum\limits_{k_1,\dots,k_n=0,1} (i j_1)^{k_1}\cdots(i j_n)^{k_n}\ket{k_1 \cdots k_n}\right].
	\end{equation}
	
\end{itemize}

From a technical standpoint, the last definition proves to be highly useful as in our calculations, it frequently becomes necessary to retain terms within sums that involve an even number of indices $k_i$, all of which are set to the value $1$. We also see that $\ket{\psi_{j_1,\cdots,j_n}}=\ket{\psi_{-j_1,\cdots,-j_n}}$ and this is why we fix the first index to $+1$ to avoid counting twice the same state.
With these definitions in place, we can state a technical theorem which has profound consequences on the stabilization with binary operators in the $(XZ)$-plane.

\begin{thm}\label{thm: eigenvalue of projectors}
 For $n$ angles $\theta_1,\dots,\theta_n$, the eigenvalues (with multiplicity) of $P_{0,\cdots,0} P_{\theta_1,\cdots,\theta_n}$ are the following: 
\begin{itemize}
 	\item  $2^{n-1}$ zeros,
 	\item $\frac{1}{2}(1+\cos(j_1\theta_1+\cdots+j_n\theta_n))$ for $j_1,...,j_n\in\{-1,+1\}$ and $j_1=1$, with the corresponding eigenvector $\ket{\psi_{j_1,\cdots,j_n}}$.
 \end{itemize}
\end{thm}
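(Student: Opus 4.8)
The plan is to diagonalise $P_{0,\cdots,0}P_{\theta_1,\cdots,\theta_n}$ by decomposing $(\C^2)^{\otimes n}$ into an orthogonal direct sum of two-dimensional subspaces, each invariant under both projectors, and then to reduce the claim to a $2\times 2$ computation on each piece. Recall $P_{0,\cdots,0}=\tfrac12(\1+\sigma_Z^{\otimes n})$ and $P_{\theta_1,\cdots,\theta_n}=\tfrac12(\1+A_{\theta_1}\otimes\cdots\otimes A_{\theta_n})$ with $A_\theta=\cos\theta\,\sigma_Z+\sin\theta\,\sigma_X$. The key observation is that the single-qubit vectors $\ket{v_j}:=\ket{0}+ij\ket{1}$ for $j\in\{-1,+1\}$ — the eigenvectors of $\sigma_Y$ — satisfy $\sigma_Z\ket{v_j}=\ket{v_{-j}}$ and, using $\sigma_X\ket{v_j}=ij\ket{v_{-j}}$ together with $j^2=1$, also $A_\theta\ket{v_j}=(\cos\theta+ij\sin\theta)\ket{v_{-j}}=e^{ij\theta}\ket{v_{-j}}$.

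For a sign vector $\mathbf j=(j_1,\dots,j_n)\in\{-1,+1\}^n$ set $\ket{\Phi_{\mathbf j}}:=\bigotimes_{m=1}^n\ket{v_{j_m}}$ and $\Theta_{\mathbf j}:=\sum_{m=1}^n j_m\theta_m$. Taking tensor products of the identities above yields $\sigma_Z^{\otimes n}\ket{\Phi_{\mathbf j}}=\ket{\Phi_{-\mathbf j}}$ and $(A_{\theta_1}\otimes\cdots\otimes A_{\theta_n})\ket{\Phi_{\mathbf j}}=e^{i\Theta_{\mathbf j}}\ket{\Phi_{-\mathbf j}}$. I would then record two elementary facts: (i) since $\langle v_j|v_{j'}\rangle=1+jj'\in\{0,2\}$, the $2^n$ vectors $\ket{\Phi_{\mathbf j}}$ are pairwise orthogonal of norm $2^{n/2}$, hence form an orthogonal basis of $(\C^2)^{\otimes n}$; and (ii) applying $\operatorname{Re}[z]=\tfrac12(z+\bar z)$ to eq.~(\ref{equstate}) and noting that conjugation in the computational basis sends $ij_m\mapsto-ij_m$, one gets $\ket{\psi_{\mathbf j}}=\tfrac12\big(\ket{\Phi_{\mathbf j}}+\ket{\Phi_{-\mathbf j}}\big)$.

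Now let $V_{\mathbf j}:=\mathrm{span}\{\ket{\Phi_{\mathbf j}},\ket{\Phi_{-\mathbf j}}\}$. By the displayed formulas $V_{\mathbf j}$ is invariant under $\sigma_Z^{\otimes n}$ and under $A_{\theta_1}\otimes\cdots\otimes A_{\theta_n}$, hence under $P_{0,\cdots,0}$, $P_{\theta_1,\cdots,\theta_n}$ and their product; moreover $V_{\mathbf j}=V_{-\mathbf j}$, and by fact (i) the $2^{n-1}$ spaces $V_{\mathbf j}$ with $j_1=+1$ are mutually orthogonal and together span everything. It therefore suffices to diagonalise $P_{0,\cdots,0}P_{\theta_1,\cdots,\theta_n}$ on each $V_{\mathbf j}$. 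In the orthonormal basis $\{2^{-n/2}\ket{\Phi_{\mathbf j}},\,2^{-n/2}\ket{\Phi_{-\mathbf j}}\}$ the restrictions are $P_{0,\cdots,0}|_{V_{\mathbf j}}=\tfrac12\begin{pmatrix}1&1\\1&1\end{pmatrix}$ and $P_{\theta_1,\cdots,\theta_n}|_{V_{\mathbf j}}=\tfrac12\begin{pmatrix}1&e^{-i\Theta_{\mathbf j}}\\e^{i\Theta_{\mathbf j}}&1\end{pmatrix}$; their product has trace $\tfrac12(1+\cos\Theta_{\mathbf j})$ and determinant $0$, so its eigenvalues are $0$ and $\tfrac12(1+\cos\Theta_{\mathbf j})$, the latter with eigenvector lying in $\mathrm{range}(P_{0,\cdots,0}|_{V_{\mathbf j}})=\C\,(\ket{\Phi_{\mathbf j}}+\ket{\Phi_{-\mathbf j}})=\C\ket{\psi_{\mathbf j}}$. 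Summing the contributions of the $2^{n-1}$ subspaces produces exactly the asserted spectrum: $2^{n-1}$ zeros and the values $\tfrac12(1+\cos(j_1\theta_1+\cdots+j_n\theta_n))$ with eigenvectors $\ket{\psi_{j_1,\cdots,j_n}}$.

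The main obstacle is conceptual rather than computational: one must hit upon the right basis, recognising that the $\sigma_Y$-eigenstate product vectors $\ket{\Phi_{\mathbf j}}$ simultaneously block-diagonalise the two projectors into $2\times2$ pieces and reproduce the states of eq.~(\ref{equstate}); after that the work collapses to a two-by-two trace/determinant calculation. One small caveat to handle in passing: when $\cos\Theta_{\mathbf j}=-1$ the corresponding $2\times2$ block is nilpotent rather than diagonalisable, but the theorem only claims the list of eigenvalues with multiplicity (equivalently, the characteristic polynomial), and the direct identity $P_{0,\cdots,0}P_{\theta_1,\cdots,\theta_n}\ket{\psi_{\mathbf j}}=\tfrac12(1+\cos\Theta_{\mathbf j})\ket{\psi_{\mathbf j}}$ confirms that $\ket{\psi_{\mathbf j}}$ is a genuine eigenvector in every case.
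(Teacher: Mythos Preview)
Your proof is correct and takes a genuinely different route from the paper's. The paper proceeds by a direct verification: it expands $P_{\theta_1,\cdots,\theta_n}\ket{\psi_{j_1,\cdots,j_n}}$ in the computational basis, splits the result according to the parity of the number of indices $k_l$ equal to~$1$, recognises the even-parity part as $\tfrac12(1+\cos\Theta_{\mathbf j})\ket{\psi_{j_1,\cdots,j_n}}$ and the odd-parity part as a $(-1)$-eigenvector of $\sigma_Z^{\otimes n}$ which is then annihilated by $P_{0,\cdots,0}$; the $2^{n-1}$ zeros are accounted for simply by the rank of $P_{0,\cdots,0}$. Your argument instead block-diagonalises both projectors simultaneously via the $\sigma_Y$-eigenbasis $\ket{\Phi_{\mathbf j}}$, reducing everything to $2^{n-1}$ independent $2\times2$ computations. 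This buys a cleaner bookkeeping of multiplicities (the direct-sum decomposition makes it transparent that you have captured \emph{all} eigenvalues, whereas the paper leaves the linear independence of the $\ket{\psi_{j_1,\cdots,j_n}}$ implicit), and it explains structurally \emph{why} the eigenvectors do not depend on the angles: the invariant planes $V_{\mathbf j}$ and the range of $P_{0,\cdots,0}$ inside each of them are fixed once and for all. The paper's computation is shorter on the page and requires no preliminary basis change, but your approach is more conceptual; your caveat about the nilpotent block when $\cos\Theta_{\mathbf j}=-1$ is a nice touch that the paper does not mention.
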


\begin{proof}[\underline{Proof}]
    See Appendix (\ref{annex: proof thm 6})
\end{proof}

Recall that  we are interested in the stabilization of a state by a family of tensor product of binary operators. Two of them are chosen as a reference, and, as already said, due to local equivalence, we can always set them to $\sigma_Z\otimes\cdots\otimes\sigma_Z$ and $A_{\theta_1}\otimes\cdots\otimes A_{\theta_N}$. Their common $+1$-eigenvector is simply given by a state stabilized by the product of the projectors $P_{0,\cdots,0}P_{\theta_1,\cdots,\theta_N}$. Given Theorem~\ref{thm: eigenvalue of projectors}, we can conclude the following. As the eigenvalues are $\frac{1}{2}(1+\cos(j_1\theta_1+\cdots+j_n\theta_n))$, we can vary parameters $\theta_1,\dots,\theta_n$ to control the number of stabilized states. Note that these eigenstates are independent of the value of $\theta_k$. Thus stabilization with operators in the $(XZ)$-plane, we will only lead to states given by equation (\ref{equstate}).
On the other hand, such states can be stabilized by taking tensor product operator from $\{\sigma_Z,\pm\sigma_X\}$ only. Indeed, $\ket{\psi_{j_1,\cdots,j_N}}$ is uniquely stabilized by the operators
\begin{equation}
    O_k=\sigma_X\otimes \sigma_Z\otimes\cdots\otimes\sigma_Z\otimes\underbrace{(-j_{k}\sigma_X)}_{\text{at }k}\otimes\sigma_Z\otimes\cdots\otimes\sigma_Z,
\end{equation}
for $k=2,\dots, n$. To see this, we can compute the eigenvalue of $O_k$ for any state $\ket{\psi_{i_1,\cdots,i_N}}$ and verify that it is $1$ if and only if $i_k=j_k$.

This means all states stabilized by operators in the $(XZ)$-plane are locally equivalent to Pauli stabilizer states. However, note that not all of Pauli stabilizer states can be obtained in this way.  For example the state $\ket{00}$ cannot be stabilized by only binary operators in the $(XZ)$-plane, because states in eq. (\ref{equstate}) for two qubits are maximally entangled, \textit{i.e}, $\ket{\psi_{j_1,j_2}}=\ket{00}\pm\ket{11}$. Since the identity operator allows for the stabilization of product-states, we shall add the identity as an element of our stabilizing set $\mathcal{A}$.

\subsection{Adding the identity}
We have seen in the previous section that stabilization by operators constructed from the $(XZ)$-plane only yields standard stabilizers states, but states in eq.~(\ref{equstate}) do not span all possibilities, thus adding identity to the set $\mathcal A$ is necessary. Therefore, we will explore stabilizing sets of the form $\mathcal{A}=\{A_{\theta,\phi},\1_2\}$. For a small number of qubits, we can exhaustively search all possibilities, and from there, we can get better insight into general constraints on the stabilizing set. We can also compute the associated stabilizer states up to local equivalence.

\subsubsection{Methods}\label{section: methods}
To do an exhaustive search, firstly, we list all non-equivalent stabilization patterns. A stabilization pattern simply corresponds to the list of stabilizers:
\begin{align}\label{eq: def list stab}
	O_1&=A_{1,1}\otimes \cdots \otimes A_{1,n}\notag\\
	&\cdots\\
	O_k&=A_{k,1}\otimes \cdots \otimes A_{k,n}\notag
\end{align}
We are interested in stabilization up to local rotation, thus, two stabilization patterns are equivalent if they can be transformed to each other via a) permutation of the qubits, b) permutation of the operators, and c) local rotations.
With this, we seek for
\begin{itemize}
 	\item {\bf Unique stabilization.} There is only one common $+1$-
eigenstate of all the operators $O_k$.
 	\item {\bf Minimal stabilization.} We cannot achieve a unique stabilization with a proper subset of the operators $\{O_k\}$.
\end{itemize}

After listing all inequivalent stabilization patterns, we
shall analyze each one separately to identify parameters $\theta_k$ for which we get unique stabilization. Finally, we shall identify stabilized states. To do so we will use two different methods.\\ 

With the first method, which we call the determinant method, we consider $\ket{\psi_i}$ ($1\leq i\leq m$) to be the eigenbasis basis of $O_1$. A common eigenstate $\ket{\psi}$ is then expanded over this basis:
\begin{equation}
    \ket{\psi}=\sum_{i=1}^m \alpha_i\ket{\psi_i}.
\end{equation}
We write $O_j\ket{\psi_i}=\sum_{l=1}^m \beta_{ijl}\ket{\psi_l}$, and thus the condition $O_j\ket{\psi}=\ket{\psi}$ implies for all $1\leq j\leq k$ and $1\leq l\leq m$,
\begin{equation}
    \sum_{i=1}^m \alpha_i \beta_{ijl}=\alpha_l.
\end{equation}
This is a linear system in $m$ unknown $\alpha_i$'s with $km$ equations, which can be summarized in matrix form as:
\begin{equation}
    M\cdot \begin{pmatrix}
        \alpha_1\\\vdots\\\alpha_m
    \end{pmatrix}=0,
\end{equation}
with $M$ being a $km\times m$ matrix. The solution of the system is then given by the kernel $M$. To find the kernel, we impose the constraint $\det(M^\dagger M)=0$ which gives us a necessary condition for the set of coefficients $\beta_{ijl}$ and consequently on the set of angles $\theta_1,\dots,\theta_N$ that parameterize our stabilizers.\\
  
The second approach, called the projector method, involves examining the projector $P_i=(\1+O_i)/2$ on the $+1$-eigenspace for each stabilizing operator $O_i$. We consider their product $M=P_1P_2\dots P_k$. A state $\ket{\psi}$ is stabilized by all $O_i$ if and only if $M\ket{\psi}=\ket{\psi}$, therefore, we search for the eigenvalues and eigenvectors of $M$. The expressions for the eigenvalues of $M$, combined with the requirement that one of them equals $1$, impose constraints on the angles $\theta_1,\dots,\theta_N$. Furthermore computing the eigenbasis of $M$ directly provides the expression of the stabilizer states.~This method requires more computational resources, making it suitable only for simpler scenarios characterized by a limited number of free parameters.\\

These two methods are applied in the analysis for two and three qubits.

\subsubsection{Two-qubits case}
We have identified four inequivalent stabilization patterns with two operators. 
Table $\ref{tab: stab 2 qubit}$ outlines the stabilization patterns along with the necessary conditions on the free parameters to achieve unique stabilization, as well as the corresponding stabilizer states.

\begin{table}[H]
    \centering
    \begingroup
        \setlength{\tabcolsep}{6pt} 
        \renewcommand{\arraystretch}{1.5} 
        \begin{tabular}{|c|c|c|}
        \hline
            Operators & Conditions & Stabilized states\\
            \hline
            $\begin{matrix}
                \sigma_Z\otimes \sigma_Z \\ 
                A_\theta\otimes A_\phi
            \end{matrix} $ & $\theta \pm \phi =0~[2\pi]$ & $\ket{\psi}=\ket{00}\pm\ket{11}$\\
            \hline         $\begin{matrix}
                \sigma_Z\otimes \sigma_Z \\ 
                A_\theta\otimes \1
            \end{matrix}$ & $\theta=0~[\pi]$ & $\ket{\psi}=\ket{00}$ or $\ket{11}$\\
            \hline $\begin{matrix}
                \sigma_Z\otimes \1 \\ 
                A_\theta\otimes \1
    	\end{matrix} 
    	$ & $\theta=0~[2\pi]$ & Non unique \\
            \hline $\begin{matrix}
                \sigma_Z\otimes \1 \\ 
                \1\otimes \sigma_Z
    	\end{matrix}$ & None & $\ket{\psi}=\ket{00}$\\
            \hline
        \end{tabular}
    \endgroup
    \caption{List of non-equivalent stabilization pattern for two qubits together with the associated stabilizer states.}
    \label{tab: stab 2 qubit}
\end{table}

We see that the stabilizer states are either product-states or maximally entangled states. These are all locally equivalent to Pauli stabilizer states.\\

\subsubsection{Three qubits-case}
The three qubit stabilization analysis is more involved, since there are dozen of inequivalent stabilization patterns. We start the analysis of stabilizations with only two operators. This gives us some minimal unique stabilization, and also a two-operators compatibility rule. We enumerate $9$ distinct stabilization patterns, but only one of them yields a unique stabilization, where the stabilizer states correspond to those described in eq. (\ref{equstate}). Additionally, we identify $8$ patterns that stabilize a two-dimensional space, which establishes a compatibility condition used in the subsequent analysis. From there, we also provide an exhaustive search for three operators. Details are provided in Appendix~\ref{annex: 3 qubit analysis}.\\
We found that stabilization is locally equivalent to Pauli stabilizer states. Table~\ref{tab: two op} and~\ref{tab} shown in appendix~\ref{annex: 3 qubit analysis} summarizes the analysis and findings.

\subsubsection{Conjecture}
Our analysis for a low number of qubits shows that all stabilization with binary operators (including identity operator) is locally equivalent to the standard Pauli stabilization. 
 This motivates us to put forward the following conjecture:
\begin{conj}
    All states stabilized by the set $\mathcal A$ composed of binary operators and the identity {\it i.e.}, $\mathcal{A}=\{A_{\theta,\phi},\1_2\}$, are locally equivalent to standard stabilizer states (\textit{i.e.}, stabilized by the Pauli group $\mathcal{P}=\{\pm 1,\pm i\}\cdot\{\1,\sigma_X,\sigma_Y,\sigma_Z\}$).
\end{conj}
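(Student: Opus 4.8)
\emph{Overall approach.} The plan is to argue by induction on the number of qubits $N$, the cases $N\le 3$ being exactly what the exhaustive analysis above (and Appendix~\ref{annex: 3 qubit analysis}) establishes. The first move is to pass from a uniquely stabilizing list $O_1,\dots,O_k\in\mathcal A^{\otimes N}$ to the group it generates: since $O_i\ket\psi=\ket\psi$ for every $i$, the state $\ket\psi$ is the unique $+1$-eigenvector of every element of $G=\langle O_1,\dots,O_k\rangle$, hence also of the compact closure $\overline G\subseteq U(2^N)$, so $\ket\psi$ is pinned down as the unique $\overline G$-invariant vector. One must be warned that $\overline G$ need not be finite nor conjugate to a subgroup of the Pauli group: already $\{\sigma_Z\otimes\sigma_Z,\;A_\theta\otimes A_\theta\}$ uniquely stabilizes $\ket{00}+\ket{11}$ for every $\theta$, with generated group of dihedral type (infinite when $\theta/\pi$ is irrational). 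So the argument has to control $\ket\psi$ directly rather than the isomorphism type of $G$.

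\emph{A compatibility calculus.} The engine is Theorem~\ref{thm: eigenvalue of projectors}. For any two operators $O_a,O_b$, the sites on which one of them is the identity are ``transparent'' and can be discarded; on the common support a single local unitary brings the surviving factors into the $(XZ)$-plane, and Theorem~\ref{thm: eigenvalue of projectors} then produces the full eigenvalue list of the product of their $+1$-projectors, so uniqueness of the common $+1$-eigenvector forces relations $\sum_i\varepsilon_i\theta_i\in 2\pi\Z$ with $\varepsilon_i\in\{\pm1\}$ among the angles on that support. Running this over all pairs converts the stabilization pattern into a rigid system of ``linking'' relations between the per-qubit Bloch direction sets $D_i\subseteq\R^3$. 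First I would extract the structural consequences: directions linked across an operator at a common site are equal, orthogonal, or mutual reflections, and a ``generic'' pattern of linkings collapses the whole direction data, up to one local rotation per qubit, onto configurations already present in the small-$N$ analysis.

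\emph{The reduction.} The rigidity is then used to force one of two situations. Either the linking system decouples across a bipartition $S\sqcup S^c$ -- equivalently $\ket\psi=\ket{\psi_S}\otimes\ket{\psi_{S^c}}$ -- in which case the restricted operators uniquely stabilize $\ket{\psi_S}$ and $\ket{\psi_{S^c}}$ on strictly fewer qubits and the inductive hypothesis closes the case; or $\ket\psi$ is irreducibly entangled across every cut, and then the linking relations determine the direction data tightly enough that, after local rotations, $\ket\psi$ is a ``network'' of the GHZ-type states of eq.~(\ref{equstate}), all of which are Pauli-stabilizable by the explicit $\{\sigma_Z,\pm\sigma_X\}$-operators exhibited after Theorem~\ref{thm: eigenvalue of projectors}. (If the local rotations happen to bring every $O_j$ itself to Pauli form, one can shortcut the last step: a family of signed Pauli strings with a unique common $+1$-eigenvector must pairwise commute -- any anticommutation or sign clash on some qubit destroys the common eigenvector -- and avoid $-\1$ in the group they generate, hence generates a genuine stabilizer group of order $2^N$.)

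\emph{Main obstacle.} The hard part is the irreducibly entangled alternative: showing that once the per-qubit directions are rigidly linked the state is genuinely \emph{forced} into an explicitly Pauli-equivalent family for all $N$. For $N\le 3$ this is what the brute-force tables verify by hand; for arbitrary $N$ it demands a real structural normal-form theorem for these ``glued GHZ'' configurations, plus an argument ruling out an intermediate regime in which the linking graph is connected yet $\ket\psi$ escapes every known family (the three-qubit Appendix is, in effect, such a check in that case). There is also a subtlety in the first reduction: tracing out qubits produces mixed states and so cannot by itself drive the induction, which is why the factorization step is the load-bearing one and one must show it is always available when $\ket\psi$ is not irreducibly entangled. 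A realistic intermediate target, already beyond the paper, is the conjecture for stabilization patterns of bounded interaction width -- every operator with at most a fixed number of non-identity factors -- provable by an inner induction on the number of stabilizers.
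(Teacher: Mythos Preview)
The statement you are attempting to prove is explicitly a \emph{conjecture} in the paper; the paper offers no proof, only the exhaustive verification for $N\le 3$ and the remark that a proof might proceed by first treating the $(XZ)$-plane and then eliminating $\sigma_Y$. So there is nothing on the paper's side to compare your route against, and you are right to frame your text as a strategy with an admitted obstacle rather than a proof.

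That said, two of the steps you treat as routine are already genuine gaps, not merely the ``hard part'' you flag at the end. First, your use of Theorem~\ref{thm: eigenvalue of projectors} as the engine for the pairwise compatibility calculus assumes you can ``discard'' any site on which one of the two operators is the identity and apply the theorem on the common support. This is not correct: if $O_a=\1\otimes O_a'$ then $P_a=\1\otimes P_a'$ factors, but $P_b=\tfrac12(\1+A_{\theta_1}\otimes O_b')$ does \emph{not}, and the eigenstructure of $P_aP_b$ genuinely involves the site carrying the identity. Theorem~\ref{thm: eigenvalue of projectors} is stated only for $P_{0,\dots,0}P_{\theta_1,\dots,\theta_n}$ with no identity factors; the two-qubit and three-qubit tables in the paper handle the mixed cases precisely because the theorem does not, and those cases are where the combinatorics actually bites (cf.\ rows \hyperlink{tab 2 2}{2}, \hyperlink{tab 2 3}{3}, \hyperlink{tab 2 5}{5}, \hyperlink{tab 2 6}{6} of Table~\ref{tab: two op}). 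You would need an extension of Theorem~\ref{thm: eigenvalue of projectors} to stabilizers with identity factors before the ``linking relation'' machinery even gets off the ground.

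Second, the structural consequence you state --- that linked directions are ``equal, orthogonal, or mutual reflections'' --- does not follow from the relations $\sum_i\varepsilon_i\theta_i\in 2\pi\Z$ that Theorem~\ref{thm: eigenvalue of projectors} actually produces. Those are linear constraints on a single pair's angles after a pair-dependent local rotation; different pairs are brought to the $(XZ)$-plane by different local unitaries, and there is no a priori reason the resulting constraints assemble into a rigid geometric picture of the per-qubit direction sets $D_i$. The three-qubit Appendix shows exactly how delicate this assembly is (e.g.\ case~\ref{case: 2}, where the determinant method is needed and a nontrivial trigonometric identity must be ruled out). Your dichotomy ``factorizes across a cut, or is a glued-GHZ network'' is therefore resting on an unproved normal-form lemma at both ends: the factorization step requires showing that a disconnected linking graph forces a tensor decomposition of $\ket\psi$ (it does not obviously, since the constraints are only pairwise), and the irreducible step is precisely the open problem. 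Your honest final paragraph already identifies this; I would move the identity-factor issue and the direction-rigidity issue up alongside it as equally load-bearing and equally open.
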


One way to tackle this conjecture is to go back to stabilization
on the $(XZ)$-plane, where a weaker conjecture with $\mathcal{A}=\{A_\theta,\1\}$
is also probably true. Furthermore, on a small number
of qubits, we observe that all Pauli stabilizations also seem possible (up to local equivalence) without the
operator $\sigma_Y$. So, potentially, it is possible that by including the identity matrix as a stabilizer, a way can be found to eliminate the $Y$ components.\\

It is interesting to point out that this conjecture is false for the case of stabilization with general $2\times 2$ unitary matrices. Indeed, as shown in \cite{ni_non-commuting_2015}, there is a setting - the so called \textit{XS-stabilizer formalism} - where one can explicitly construct a stabilizer state on $6$ qubits which is not locally equivalent to a Pauli stabilizer state. The set $\mathcal{A}$ of this setting is the group generated by three operators $\{\sigma_X, \operatorname{diag}(1,i), e^{i\pi/4}\1\}$. The stabilizer state is given by
\begin{equation}
	\ket{\psi}=\sum\limits_{x_i=0,1} (-1)^{x_1x_2x_3}\ket{x_1,x_2,x_3,x_1\oplus x_2,x_2\oplus x_3,x_3\oplus x_1},
\end{equation}
(where $\oplus$ correspond to the addition modulo $2$) and it is stabilized by the following set

\begin{align}
\begin{split}
		O_1&=\sigma_X\otimes S^3\otimes S^3\otimes S \otimes \sigma_X\otimes \sigma_X\\
		O_2&=S^3\otimes \sigma_X\otimes  S^3 \otimes \sigma_X\otimes S\otimes \sigma_X\\
		O_3&=S^3\otimes  S^3 \otimes\sigma_X\otimes \sigma_X\otimes \sigma_X\otimes S.
\end{split}
\end{align}
One can see that this state is not a Pauli stabilizer state because the exponents of the coefficients $(-1)^{x_1x_2x_3}$ are a cubic polynomial in variables $x_1,x_2,x_3$. However, showing that this state is not locally equivalent to a Pauli stabilizer state is more involved and requires a full classification of standard stabilizer formalism states \cite{ni_non-commuting_2015}.

\section{Conclusion}\label{section: conclusion}
We have investigated two important
questions about stabilizer formalism in this work. We 
have found a set of new constraints on the set of
stabilization operators used to
generate non-trivial generalized Clifford groups. We
have also shown that such a set cannot generate a group
dense inside $SU(d)$. Finally, we explored the case of
stabilization with binary operators, and we observed, for a low number of qubits possibility of stabilization only by the standard Pauli group (up to local equivalence). This led us to conjecture that the same holds for any number of qubits.
It is worth pointing out that there exist many generalizations for the case of
qudits ($d > 2$) \cite{gheorghiu_standard_2014,grassl_efficient_2003}. It is thus interesting to investigate equivalent versions of this conjecture in such settings. Our findings offer promising perspectives regarding new Gottesman-Knill-like theorems and potential applications in quantum error correction \cite{gottesman_heisenberg_1998}.

\section{Acknowledgments}\label{section: acknoledgments}
Authors would like to  thank Sebastian Horvat for his valuable feedback. E.D. extends thanks to the University of Vienna and the Erasmus program for enabling his fellowship and contributing to his stay in Vienna. This research was funded in whole, or in part, by the Austrian Science Fund (FWF) [F7115] (BeyondC). For the purpose of open access, the author(s) has applied a CC BY public copyright licence to any Author Accepted Manuscript version arising from this submission.

\bibliographystyle{unsrturl} 
\bibliography{refs}

\appendix
\onecolumngrid
\section{Proof of proposition~\ref{pro: adherence clifford}}
\label{annex: proof pro 1}
\begin{proof}[\underline{Proof}] 	
 	$\blacktriangleright$ We start with $N=2$ qudits and a gate $U\in \mathcal{C}_2(\mathcal{A})$.  We define the set
 	\begin{equation}
 		G=\{u\otimes v\in U(d)\otimes U(d)|U(u\otimes v)U^\dagger \in U(d)\otimes U(d)\}.
 	\end{equation}
We show that $G=U(d)\otimes U(d)$. This is because of the following:
\begin{itemize}
    \setlength\itemsep{0em}
	\item G is a subgroup of $U(d)\otimes U(d)$,
	\item By definition of $U$, tensor product of elements of $\mathcal{A}$ are in $G$ , {\it i.e.},  $\mathcal{A} \otimes \mathcal{A}\subset G$,
	\item $G$ is closed,
	\item $G$ remains stable when multiplied by a complex phase, {\it i.e.}, $\forall \varphi,~e^{i\varphi}G\subset G$,
	\item Since $SU(d)\subset \bar{\langle \mathcal{A}\rangle}$, for all $u\in U(d)$, we have $u\otimes u\in G$.
\end{itemize}
Thus, for $A_1,A_2\in\mathcal{A}$, $u,w\in U(d)$ and $e^{i\varphi}\in U(1)$, we have:
\begin{align}
\begin{split}
	&e^{i\varphi}(uA_1w)\otimes(uA_2w)\in G\\
	&\Rightarrow e^{i\varphi}(uA_1A_2^\dagger u^\dagger)\otimes \1\in G ~~(\text{ for }w=(uA_2)^\dagger ).
\end{split}
\end{align} 
	
 Now we introduce
\begin{equation}
 H=\bar{\langle \{e^{i\varphi}uA_1A_2^\dagger u^\dagger|\varphi\in\R,~u\in U(d),~A_1,A_2\in \mathcal{A}\}\rangle}.
 \end{equation} 
 $H$ is a closed normal subgroup of $U(d)$ containing $U(1)$, by definition. For the subgroup $\tilde{H}$ of $H$ of determinant-one matrices, we see that $\tilde H\triangleleft SU(d)$. Because $SU(d)$ is a simple Lie group, we only have two possibilities (since $\tilde{H}$ is also closed):
 \begin{itemize}
    \setlength\itemsep{0em}
 	\item $\tilde{H}\subset \operatorname{Z}(SU(d))=\{e^{2i\pi k/d}\1|k=0,\cdots,d-1\}$ (the center of $SU(d)$). This would imply $H= U(1)$, and thus for all $A_1,A_2\in\mathcal{A}$, $A_1\propto A_2$. This is not possible since $\mathcal{A}$ generate the entire $SU(d)$ group.
 	\item Because of the previous point, we necessarily have $\tilde{H}=SU(d)$ and $H=U(d)$.
 \end{itemize}
 This means that $U(d)\otimes \1\subset G$. Similarly, we show that $\1\otimes U(d)\subset G$. And we finally get  $G=U(d)\otimes U(d)$.  Thus $U\in \mathcal{C}_2(U(d))$ and $\mathcal{C}_2(\mathcal{A})\subset \mathcal{C}_2(U(d))$.\\
 
 $\blacktriangleright$ For the general case of $N$ qudits, we take $U\in \mathcal{C}_N(\mathcal{A})$ and we define in a similar manner the group $G$:
 \begin{equation}
 	G=\{u_1\otimes\cdots\otimes u_N\in U(d)^{\otimes N}|U(u1\otimes\cdots\otimes u_N )U^\dagger \in U(d)^{\otimes ^N}\}.
 \end{equation}
 
 All the properties listed for $G$ (generalized on $N$ qudits) for the two-qudits case are still valid. This means that for all $A_1,A_2\in\mathcal{A}$, $u,w\in U(d)$, $\varphi\in\R$ we have
 \begin{align}
 \begin{split}
 	&(e^{i\varphi}uA_1w)\otimes\cdots\otimes(e^{i\varphi}uA_1w)\otimes(uA_2w)\in G\\
 	\Rightarrow &(e^{i\varphi}uA_1A_2^\dagger u^\dagger)\otimes\cdots\otimes(e^{i\varphi}uA_1A_2^\dagger u^\dagger)\otimes \1\in G.
 	\end{split}
 \end{align}
As before, we introduce the set: 
\begin{equation}
H=\bar{\langle \{e^{i\varphi}uA_1A_2^\dagger u^\dagger|\varphi\in\R,~u\in U(d),~A_1,A_2\in \mathcal{A}\}\rangle}
\end{equation} 
The same argument applies here and we have $H=U(d)$. This means $\{u\otimes\cdots\otimes u\otimes\1|u\in U(d)\}\subset G$. By recursive application of the argument we get $\{u\otimes\cdots\otimes u\otimes\1\otimes \1|u\in U(d)\}\subset G$ all the way until we get $U(d)\otimes\1\cdots\otimes \1\subset G$. Through permutations of the qudits, we also obtain that $\1\otimes\cdots\otimes\1\otimes U(d)\otimes\1\otimes\cdots\otimes\1\subset G$. If we take the product of all those sets, we finaly have that $U(d)^{\otimes N}\subset G$. Thus $U\in \mathcal{C}_N(U(d))$ and $\mathcal{C}_N(\mathcal{A})\subset \mathcal{C}_N(U(d))$.
 \end{proof}

\section{Proof of proposition~\ref{pro: non entangling}}
\label{annex: proof pro 2}
\begin{proof}[\underline{Proof}]
 	For the $i$-th qudit, we introduce an orthonormal basis $(\ket{\varphi_{i,j}})_{j}$ with $\ket{\varphi_{i,1}}=\ket{\varphi_i}$.\\
 	We then define the following operator
 	\begin{equation}
 		O=\bigotimes\limits_{k=1}^N \left(\sum\limits_{l=1}^d e^{i (l-1)\pi/d^k}\ketbra{\varphi_{k,l}} 		
 		\right).
 		\end{equation}
	We have $O\in U(d)^{\otimes N}$. By expanding the tensor product, one can verify that all the eigenvalues of $O$ are different. Furthermore if we consider an operator $A$ with eigenvalue $\lambda$ and corresponding eigenvector $\ket{\rho}$ we see that:
	\begin{equation}
		A\ket{\rho}=\lambda\ket{\rho} \Rightarrow UAU^\dagger (U\ket{\rho})=\lambda U\ket{\rho}.
	\end{equation}
Thus eigenvectors of $A$ are sent to eigenvectors of $UAU^\dagger$ with same eigenvalues. By hypothesis on $U$, we have $UOU^\dagger\in U(d)^{\otimes N}$, which means that $UOU^\dagger$ has an eigenbasis composed of product-states. Considering the uniqueness of the eigenvalues, it follows that $U\ket{\psi}$ is a product-state.
\end{proof}

\section{Two lemmas}
\label{appendix: lemma}
In this section, we present and prove two technical lemmas that will be utilized in the proof of Proposition~\ref{pro: non entangling are trivial}.
\begin{lem}\label{lemma: simple}
If the following conditions hold for $u,v,w,x\in\C^k\setminus\{0\}$:
\begin{itemize}
\item $u$ and $v$ are not collinear,
\item $w=\alpha u$ and $x=\beta v$,
\item $u+v\propto w+x$.
\end{itemize}
then $\alpha=\beta$.
\end{lem}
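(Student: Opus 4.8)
The plan is to reason directly from the three hypotheses, using collinearity/non-collinearity as the main lever. We have $w + x = \alpha u + \beta v$, and the hypothesis $u+v \propto w+x$ says there is a scalar $\lambda$ (nonzero, since $u+v \neq 0$ because $u,v$ are not collinear, hence in particular not negatives of one another... actually I should be careful: non-collinear does guarantee $u+v \neq 0$, since $u+v=0$ would force $v = -u$, collinear) with $w + x = \lambda(u+v)$. Combining the two expressions for $w+x$ gives $\alpha u + \beta v = \lambda u + \lambda v$, i.e. $(\alpha - \lambda) u + (\beta - \lambda) v = 0$.

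The key step is then to invoke linear independence: since $u$ and $v$ are not collinear, they are linearly independent in $\C^k$, so the relation $(\alpha-\lambda)u + (\beta-\lambda)v = 0$ forces $\alpha - \lambda = 0$ and $\beta - \lambda = 0$. Hence $\alpha = \lambda = \beta$, which is exactly the claim.

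The only subtlety worth spelling out — and the one place a reader might want a line of justification — is why $u+v \neq 0$, so that the proportionality constant $\lambda$ is well-defined and the statement $u+v \propto w+x$ is non-vacuous; as noted, this is immediate from non-collinearity. I do not anticipate a genuine obstacle here: the whole argument is three lines of linear algebra, with the ``hard part'' being merely to state the hypotheses carefully (in particular noting that non-collinear vectors in $\C^k$ are linearly independent over $\C$, even though the scalars $\alpha,\beta,\lambda$ are allowed to be complex). I would write it as: express $w+x$ two ways, subtract, and conclude by independence.
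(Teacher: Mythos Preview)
Your proof is correct and follows essentially the same approach as the paper: the paper expresses $w+x$ and $u+v$ in the basis $\{u,v\}$ (valid by non-collinearity) to reduce the proportionality to $(1,1)^T\propto(\alpha,\beta)^T$, which is exactly your subtraction-and-linear-independence step. Your extra care in noting why $u+v\neq 0$ is a welcome clarification but not a departure from the paper's argument.
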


\begin{proof}[\underline{Proof}]
	 We represent vectors $w$ and $x$ in basis $\{u,v\}$, the condition $u+v\propto w+x$ becomes:
	\begin{equation}
		\left(\begin{matrix}
		1\\1
		\end{matrix}\right)\propto \left(\begin{matrix}
		\alpha\\ \beta
		\end{matrix}\right),
	\end{equation}
	thus $\alpha=\beta$.
\end{proof}

\mycomment{

\begin{lem}\label{lemma: hard}
 Let $U$ be an invertible matrix of size $dN$. There exist $d\times d$ unitary matrices $m$ and $n$ such that
\begin{equation}
	V=(m\otimes\1_N)U(n\otimes \1_N),
\end{equation}
 possess the following properties
\begin{itemize}
	\item If we write $V$ in the block-form $V=\left(\begin{matrix}
	x_{1,1}&\cdots&x_{1,dN}\\
	\vdots& &\vdots\\
	x_{d,1}&\cdots&x_{d,dN}
	\end{matrix}\right)$ where the $x_{k,l}$ are all column vectors of length $N$, then all the $x_{k,l}$ are non-zero (meaning that they have at least one non-zero component).
	\item If we write $V$ in the block-form $V=\left(\begin{matrix}
	y_{1,1}&\cdots&y_{1,d}\\
	\vdots& &\vdots\\
	y_{dN,1}&\cdots&y_{dN,d}
	\end{matrix}\right)$ where the $y_{k,l}$ are all row vectors of length $N$, all the $y_{k,l}$ are non-zero.
	\item For any permutation of columns of $V$, the second point is still true.
\end{itemize}
\end{lem}

\begin{proof}[\underline{Proof}]
Let us fix the notation. We write $U$ in the block form in three different ways. First with blocks of size $N\times N$, then $N\times 1$ and finally $1\times N$.

\begin{equation}\label{eq: diff block form}
		U=\left(\begin{matrix}
				A_{1,1}&\cdots&A_{1,d}\\
				\vdots&&\vdots\\
				A_{d,1}&\cdots&A_{d,d}
			\end{matrix}\right)
		=\left(\begin{matrix}
				\alpha_{1,1}&\cdots&\alpha_{1,dN}\\
				\vdots&&\vdots\\
				\alpha_{d,1}&\cdots&\alpha_{d,dN}
			\end{matrix}\right)
		=\left(\begin{matrix}
				\beta_{1,1}&\cdots&\beta_{1,d}\\
				\vdots&&\vdots\\
				\beta_{dN,1}&\cdots&\beta_{dN,d}
			\end{matrix}\right),
	\end{equation}
where the $A_{k,l}$'s are $N\times N$ square matrices, the $\alpha_{k,l}$'s are column vectors of length $N$ and the $\beta_{k,l}$'s are length $N$ row vectors. Now, we introduce the $d\times d$ matrix $M$ which depends on a real parameter $t$:
\begin{equation}
	M=(i+1+t^2+\cdots+t^{2(d-1)})\1-2\left(
	  \begin{matrix}
		1&t&t^2&\cdots&t^{d-1}\\
		t&t^2&t^3&\cdots&t^d\\
		\vdots&\vdots&\vdots&&\vdots\\
		t^{d-1}&t^d&t^{d+1}&\cdots&t^{2(d-1)}
	\end{matrix}\right).
\end{equation}
The general matrix element of $M$ is
\begin{equation}
	M_{k,l}=-2t^{k+l-2} + \delta_{kl}(i+\sum\limits_{s=0}^{d-1} t^{2s}).
\end{equation}
Firstly, let us check that the columns of $M$ are orthogonal to each other, {\it i.e.}, for $l_0\neq l_1$ we have

\begin{align}
	\sum\limits_{k=1}^d M_{k,l_0} M_{k,l_1}^\ast&=\sum\limits_{k=1}^d \left(-2t^{k+l_0-2}+\delta_{kl_0}i+\delta_{k,l_0}\sum\limits_{s=0}^{d-1}t^{s}\right)\left(-2t^{k+l_1-2}-\delta_{kl_1}i+\delta_{k,l_1}\sum\limits_{s=0}^{d-1}t^{s}\right)\\
	&=\sum\limits_{k=1}^d\Bigg[4t^{2k+l_0+l_1-4}+2t^{k+l_0-2}\delta_{kl_1}i-2t^{k+l_0-2}\delta_{kl_1}\sum\limits_{s=0}^{d-1} t^{2s}-2t^{k+l_1-2}\delta_{kl_0}i-2t^{k+l_1-2}\delta_{kl_0}\sum\limits_{s=0}^{d-1}t^{2s}\Bigg]\\
	&=4t^{l_0+l_1-2}\sum\limits_{k=0}^{d-1}t^{2k}+\cancel{2t^{l_0+l_1-2}i}-\cancel{2t^{l_0+l_1-2}i}-2t^{l_0+l_1-2}\sum\limits_{s=0}^{d-1}t^{2s}-2t^{l_0+l_1-2}\sum\limits_{s=0}^{d-1}t^{2s}\\
	&=0.
\end{align}
Then we verify that all the columns of matrix $M$ have the same norm
\begin{align}
	\sum\limits_{k=1}^d M_{k,l} M_{k,l}^\ast&=\sum\limits_{k=1}^d \left(-2t^{k+l-2}+\delta_{kl}i+\delta_{k,l}\sum\limits_{s=0}^{d-1}t^{s}\right)\left(-2t^{k+l-2}-\delta_{kl}i+\delta_{k,l}\sum\limits_{s=0}^{d-1}t^{s}\right)\\
	&=\sum\limits_{k=1}^d\Bigg[4t^{2k+2l-4}-2t^{k+l-2}\delta_{kl}\sum\limits_{s=0}^{d-1} t^{2s}+\delta_{kl}-2t^{k+l-2}\delta_{kl}\sum\limits_{s=0}^{d-1}t^{2s} +\delta_{kl}\left(\sum\limits_{s=0}^{d-1}t^{2s}\right)^2\Bigg]\\
	&=4t^{2l-2}\sum\limits_{k=0}^{d-1}t^{2k}+1-4t^{2l-2}\sum\limits_{s=0}^{d-1} t^{2s}+\left(\sum\limits_{s=0}^{d-1}t^{2s}\right)^2\\
	&=1+\left(\sum\limits_{s=0}^{d-1} t^{2s}\right)^2,
\end{align}
We see that the last expression is independent of the column index $l$. The last two properties show that for all $t$ the matrix $M$ is proportional to a unitary matrix. Now, we can take $\tilde{m}=M(t)$ and $\tilde{n}=M(t')$ and show that there exist some values $t$ and $t'$ such that the requirements of the lemma are met. At the final step, we simply normalize the matrices $\tilde n$ and $\tilde m$.
We shall firstly look at the product $(\tilde{m}\otimes\1)U$:

\begin{align}
	\label{equmU}
	(\tilde{m}\otimes\1)U_{k,l}&=t^{k-1}(A_{1,l}+tA_{2,l}+t^2A_{3,l}+\cdots+t^{d-1}A_{d,l})-2A_{k,l}(i+1+t^2+\cdots+t^{2(d-1)})\\
	&:=P_{k,l}(t),
\end{align}
calculated at the square $N\times N$ block at position $(k,l)$.
This block is a  matrix whose coefficients are polynomial functions in $t$ of degree at most $2d-2$.\\

Now, we shall investigate the decomposition of $U$ with the blocks $\alpha_{k,l}$  (see equation~\ref{eq: diff block form}). Since $U$ is invertible, each column of $U$ has at least one non zero coefficient. This means that for a fixed $l$, there is at least one value of $k$, such that  $\alpha_{k,l}\neq 0$ (as a column vector of length $N$). The column of $P_{k,l}(t)$ have the generic form  $\sum_{l=1}^{d} a_j t^{l+k-2}-2a_k\left(i+\sum_{l=0}^{d-1} t^{2l}\right)$, and we can assume that at least one of the $a_i$'s is non-zero (due to what we just said on $\alpha_{k,l}\neq 0$). The condition that no columns of $P_{k,l}$ vanishes is equivalent to a a system of $N$ polynomial equations of the form
\begin{align}
\label{equPoly}
	&t^{k-1}(a_1+a_2t+a_3t^2+\cdots+a_dt^{d-1})-2a_k(i+1+t^2+\cdots+t^{2(d-1)})\neq 0.
\end{align}
Clearly, this is a non-zero polynomial, otherwise the constant coefficient $(\delta_{1k}-2i-2)a_k$ would be zero. This would imply $a_k=0$ and consequently, all the coefficients $a_j$ would also be zero, which is a contradiction.\\
 
Next, we multiply expression (\ref{equmU}) by $(\tilde{n}\otimes\1)$ from the right-hand side with the choice $t'=t^{2d}$ and we get while considering the $N\times N$ block at position $(k,l)$
\begin{align}
    \label{equmUn}
    (\tilde{m}\otimes\1)U(\tilde{n}\otimes\1)_{k,l}&= (t^{2d})^{l-1}\left[P_{k,1}(t)+t^{2d}P_{k,2}(t)+(t^{2d})^2P_{k,3}(t)+\cdots+(t^{2d})^{d-1}P_{k,d}(t)\right]\\
    &~~~~~~-2P_{k,l}(t)(i+1+(t^{2d})^2+\cdots+(t^{2d})^{2(d-1)}).
\end{align}
Once again, our objective is to ensure that the columns of these blocks are non-zero. Again, we get a system of polynomial equations that have to be different from zero.  To verify that the corresponding polynomial is non zero. If it were zero, examination of terms of degree less than $2d$ would show that columns of the block $P_{k,l}$ are zero, contradicting our earlier proof. Hence, to fulfill the first condition of the lemma, it is necessary to meet a finite number of non-zero polynomial requirements. Since polynomials have a finite number of roots, we know that we can find at least one value of $t$ which is not a root of any of these polynomials. Consequently, this implies that for such $t$ all the columns of $(\tilde{m}\otimes\1)U(\tilde{n}\otimes\1)_{k,l}$ are non-zero. This is exactly the requirement of the first property of $V$.\\
  
In order to satisfy the second property of the lemma, we shall examine  the rows of $U$. Since $U$ is unitary, for arbitrary  $j$ of $U$, there exists at least one non-zero $\beta$-block, denoted  $\beta_{j,l}$. In complete analogy to our previous analysis of columns, we can show that the blocks, corresponding to 
 the rows of  $(m\otimes\1)U(n\otimes\1)_{k,l}$ are non zero polynomials in $t$. Once again, these blocks can be zero, only when $t$ is the roots of these polynomial, which are only finite in number. The second condition of the lemma can then be fulfilled by any value of $t$ which is not one of these roots. \\
 
To find $t$ satisfying the third condition of the lemma, we observe that exchanging columns of $V$ does not alter the form of the equations (\ref{equmU},~\ref{equPoly},~\ref{equmUn}). Hence, we have a finite set of non-zero polynomial functions that must not vanish. Similarly to the two previous the analysis done for the first two conditions of the lemma, there are only a finite number of roots to these polynomial equations. One can then find a value of $t$, that is not any of these root, and thus for such $t$, the three conditions of the lemma are met.\\

Given the infinite number of choices for $t$, we can conclude that at least one value satisfies all the requirements of the lemma and we label it as $t=t_0$. Finally, we select
  \begin{align}
  		m=\tilde{m}\Bigg/\sqrt{1+\left(\sum\limits_{s=0}^{d-1}t_0^{2s}\right)^2}&&n=\tilde{n}\Bigg/\sqrt{1+\left(\sum\limits_{s=0}^{d-1}t_0^{4ds}\right)^2}.
  \end{align}
\end{proof}
}

\begin{lem}\label{lemma: hard}
 Let $U$ be an invertible matrix of size $d^n$. There exist $d\times d$ unitary matrices $u_1,\dots, u_n$ and $v_1,\dots,v_n$ such that all the coefficients of the matrix 
\begin{equation}
	V=(u_1\otimes\cdots\otimes u_n)U(v_1\otimes \cdots\otimes v_n)
\end{equation}
 are all non-zero.
\end{lem}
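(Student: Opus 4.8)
The plan is to prove Lemma~\ref{lemma: hard} by a dimension/genericity argument: the ``bad'' set of tuples of local unitaries producing at least one vanishing entry of $V$ is contained in a finite union of proper real-algebraic subvarieties of the compact group $U(d)^{\times 2n}$, hence has measure zero, so a generic choice works. Concretely, fix a pair of basis indices $(\alpha,\beta)$ of $(\C^d)^{\otimes n}$, say $\alpha=(a_1,\dots,a_n)$ and $\beta=(b_1,\dots,b_n)$. The corresponding entry of $V$ is $V_{\alpha\beta}=\bra{\alpha}(u_1\otimes\cdots\otimes u_n)U(v_1\otimes\cdots\otimes v_n)\ket{\beta}$, which equals $\bra{\varphi_1\cdots\varphi_n}U\ket{\chi_1\cdots\chi_n}$ where $\bra{\varphi_k}=\bra{a_k}u_k$ and $\ket{\chi_k}=v_k\ket{b_k}$ are, as the $u_k,v_k$ range over $U(d)$, arbitrary unit vectors of $\C^d$. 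So the statement reduces to: for an invertible $U$, a generic choice of product bra $\bra{\varphi_1\cdots\varphi_n}$ and product ket $\ket{\chi_1\cdots\chi_n}$ gives $\bra{\varphi_1\cdots\varphi_n}U\ket{\chi_1\cdots\chi_n}\neq 0$; and moreover one can choose the $u_k,v_k$ so that \emph{all} $d^{2n}$ entries are simultaneously nonzero.

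The key steps, in order, are as follows. First I would show the single-entry statement: the function $(\ket{x_1},\dots,\ket{x_n},\ket{y_1},\dots,\ket{y_n})\mapsto \bra{x_1\cdots x_n}U\ket{y_1\cdots y_n}$ is not identically zero on $(\C^d)^{\times 2n}$. Indeed, if it vanished on all product vectors, then since product vectors span $(\C^d)^{\otimes n}\otimes(\C^d)^{\otimes n}$ (the span of $\ket{y_1\cdots y_n}$ is all of $(\C^d)^{\otimes n}$, etc.), $U$ would be the zero matrix, contradicting invertibility — more carefully, fixing generic $\ket{y_k}$'s makes $\bra{x_1\cdots x_n}(U\ket{y_1\cdots y_n})$ a multilinear form in the $\ket{x_k}$ that vanishes identically only if $U\ket{y_1\cdots y_n}=0$, impossible for invertible $U$ on a nonzero vector. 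Hence this entry, viewed as a function of $(u_1,\dots,u_n,v_1,\dots,v_n)\in U(d)^{\times 2n}$ via $\bra{x_k}=\bra{a_k}u_k$, $\ket{y_k}=v_k\ket{b_k}$, is a real-analytic (indeed polynomial in the matrix entries and their conjugates) function that is not identically zero, because the map $U(d)\to \C^d$, $u\mapsto u^\dagger\ket{a_k}$ is surjective onto the unit sphere and the zero set of a nonzero real-analytic function on the connected analytic manifold $U(d)^{\times 2n}$ has measure zero. Second, I take the finite union over all $d^{2n}$ index pairs $(\alpha,\beta)$ of these measure-zero ``bad'' sets; its complement is nonempty (in fact dense), so there is a tuple $(u_1,\dots,u_n,v_1,\dots,v_n)$ avoiding all of them, and for this tuple every entry of $V$ is nonzero. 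Third, I normalize nothing further — the $u_k,v_k$ are already unitary — and conclude.

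The main obstacle, and the only place needing genuine care, is the non-vanishing claim at the heart of step one: that the entry is not identically zero as a function on the product of unitary groups. The cleanest route is the two-layer argument above — first on $(\C^d)^{\times 2n}$ (pure multilinear algebra plus invertibility of $U$), then transferring to $U(d)^{\times 2n}$ using that $\{u^\dagger\ket{a}: u\in U(d)\}$ is the whole unit sphere in $\C^d$, together with the standard fact that a real-analytic function on a connected manifold vanishes on a positive-measure set only if it is identically zero. An alternative, more elementary but more computational route (mirroring the commented-out proof in the manuscript) would be to exhibit explicit one-parameter families $u_k=u_k(t)$, $v_k=v_k(t)$ making each $V_{\alpha\beta}$ a nonzero polynomial in $t$, then pick $t$ avoiding finitely many roots; I would mention this as a remark but prefer the genericity argument for brevity and full generality in $d$ and $n$.
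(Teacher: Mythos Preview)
Your argument is correct and is genuinely different from the paper's. The paper proceeds constructively: it introduces an explicit one-parameter family $M(t)$ of $d\times d$ matrices (a rank-one perturbation of a scalar multiple of the identity) that are proportional to unitaries for every real $t$, and then proves by induction on $n$ that for suitable parameters $t_1,\dots,t_n,t_1',\dots,t_n'$ the matrix $(M(t_1)\otimes\cdots\otimes M(t_n))\,U\,(M(t_1')\otimes\cdots\otimes M(t_n'))$ has only nonzero entries. The inductive step works block-wise and reduces to the observation that the entries are nonzero polynomials in $t$ with matrix coefficients, so only finitely many values of $t$ are forbidden.

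Your route replaces all of this by a clean genericity argument on the compact connected real-analytic group $U(d)^{\times 2n}$: each entry $V_{\alpha\beta}$ is a polynomial in the matrix entries of the $u_k,v_k$ and hence real-analytic; invertibility of $U$ together with the fact that product vectors span $(\C^d)^{\otimes n}$ shows this function is not identically zero; therefore its zero set has Haar measure zero, and a finite union of such sets still has measure zero. The existence of a good tuple follows immediately. This is considerably shorter and conceptually transparent, and it works uniformly in $d$ and $n$ without any inductive bookkeeping. What the paper's approach buys in exchange is elementarity: it avoids invoking the measure-zero property of zero sets of real-analytic functions on manifolds, and in principle yields an explicit (if unwieldy) description of admissible local unitaries, which could be useful if one later wants effective bounds or algorithmic content. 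Your closing remark already captures this trade-off accurately.

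One very minor cosmetic point: the entry $V_{\alpha\beta}$ is in fact polynomial in the matrix entries of the $u_k,v_k$ alone (no conjugates appear), so ``polynomial in the entries and their conjugates'' is stronger than needed; this does not affect the argument.
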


\begin{proof}[\underline{Proof}]
The matrices $u_i$ and $v_j$ will be constructed with the help of a $d\times d$ matrix $M$ which depends on a real parameter $t$,
\begin{equation}
	M=(i+1+t^2+\cdots+t^{2(d-1)})\1-2\left(
	  \begin{matrix}
		1&t&t^2&\cdots&t^{d-1}\\
		t&t^2&t^3&\cdots&t^d\\
		\vdots&\vdots&\vdots&&\vdots\\
		t^{d-1}&t^d&t^{d+1}&\cdots&t^{2(d-1)}
	\end{matrix}\right).
\end{equation}
The general matrix element of $M$ is
\begin{equation}
	M_{k,l}=-2t^{k+l-2} + \delta_{kl}(i+\sum\limits_{s=0}^{d-1} t^{2s}).
\end{equation}
Firstly, let us check that the columns of $M$ are orthogonal to each other, {\it i.e.}, for $l_0\neq l_1$ we have

\begin{align}
	\sum\limits_{k=1}^d M_{k,l_0} M_{k,l_1}^\ast&=\sum\limits_{k=1}^d \left(-2t^{k+l_0-2}+\delta_{kl_0}i+\delta_{k,l_0}\sum\limits_{s=0}^{d-1}t^{s}\right)\left(-2t^{k+l_1-2}-\delta_{kl_1}i+\delta_{k,l_1}\sum\limits_{s=0}^{d-1}t^{s}\right)\\
	&=\sum\limits_{k=1}^d\Bigg[4t^{2k+l_0+l_1-4}+2t^{k+l_0-2}\delta_{kl_1}i-2t^{k+l_0-2}\delta_{kl_1}\sum\limits_{s=0}^{d-1} t^{2s}-2t^{k+l_1-2}\delta_{kl_0}i-2t^{k+l_1-2}\delta_{kl_0}\sum\limits_{s=0}^{d-1}t^{2s}\Bigg]\\
	&=4t^{l_0+l_1-2}\sum\limits_{k=0}^{d-1}t^{2k}+\cancel{2t^{l_0+l_1-2}i}-\cancel{2t^{l_0+l_1-2}i}-2t^{l_0+l_1-2}\sum\limits_{s=0}^{d-1}t^{2s}-2t^{l_0+l_1-2}\sum\limits_{s=0}^{d-1}t^{2s}\\
	&=0.
\end{align}
Then we verify that all the columns of matrix $M$ have the same norm. So for a fixed $l$ we compute
\begin{align}
	\sum\limits_{k=1}^d M_{k,l} M_{k,l}^\ast&=\sum\limits_{k=1}^d \left(-2t^{k+l-2}+\delta_{kl}i+\delta_{k,l}\sum\limits_{s=0}^{d-1}t^{s}\right)\left(-2t^{k+l-2}-\delta_{kl}i+\delta_{k,l}\sum\limits_{s=0}^{d-1}t^{s}\right)\\
	&=\sum\limits_{k=1}^d\Bigg[4t^{2k+2l-4}-2t^{k+l-2}\delta_{kl}\sum\limits_{s=0}^{d-1} t^{2s}+\delta_{kl}-2t^{k+l-2}\delta_{kl}\sum\limits_{s=0}^{d-1}t^{2s} +\delta_{kl}\left(\sum\limits_{s=0}^{d-1}t^{2s}\right)^2\Bigg]\\
	&=4t^{2l-2}\sum\limits_{k=0}^{d-1}t^{2k}+1-4t^{2l-2}\sum\limits_{s=0}^{d-1} t^{2s}+\left(\sum\limits_{s=0}^{d-1}t^{2s}\right)^2\\
	&=1+\left(\sum\limits_{s=0}^{d-1} t^{2s}\right)^2\neq 0.
\end{align}
We see that the last expression is independent of the column index $l$. The last two computations show that for all $t$ the matrix $M$ is proportional to a unitary matrix. This means that we can take $\tilde{u_i}=M(t_i)$ and $\tilde{v_j}=M(t_j')$ and show that there exist some values $t_i$ and $t_j'$ such that the requirements of the lemma are met. At the final step, we simply normalize the matrices $\tilde u_i$ and $\tilde v_j$. To prove the existence of $t_i$ and $t'_j$, we show the following statement by induction:\\

For all finite family $\{A^{(i)}\}_{i\in I}$ of non-zero $d^n\times d^n$ matrices ({\it i.e.}, $A^{(i)}\in \mathcal M_{d^n}(\C)\setminus \{0\}$), there exist some real numbers $t_1,\dots,t_n$ and $t_1',\dots,t_n'$ such that
\begin{equation}
    \forall i\in I,~\Big(M(t_1)\otimes \cdots\otimes M(t_n)\Big)A^{(i)}\Big(M(t_1')\otimes \cdots \otimes M(t_n')\Big)\in \mathcal M_{d^n}(\C\setminus\{0\}),
\end{equation}
{\it i.e.}, are matrices with only non zero coefficients.\\

If $n=0$, then the $A^{(i)}$'s are non zero $1\times 1$ matrix, {\it i.e.}, single non zero numbers. In this case there is nothing to show.\\

If $n\geq 1$, lets consider a finite set $\{A^{(i)}\}_{i\in I} $ of non zero matrices. We  write each of these matrices in block form as

\begin{equation}\label{eq: diff block form}
		A^{(i)}=\left(\begin{matrix}
				A^{(i)}_{1,1}&\cdots&A^{(i)}_{1,d}\\
				\vdots&&\vdots\\
				A^{(i)}_{d,1}&\cdots&A^{(i)}_{d,d}
			\end{matrix}\right),
	\end{equation}
where the $A^{(i)}_{k,l}$'s are $d^{n-1}\times d^{n-1}$ square matrices. We shall firstly look at the product $(M(t)\otimes\1\otimes \cdots \otimes \1)A^{(i)}$:

\begin{align}
	\label{equmU}
	\Big[(M(t)\otimes\1\otimes \cdots \otimes \1)A^{(i)}\Big]_{k,l}&=t^{k-1}(A^{(i)}_{1,l}+tA^{(i)}_{2,l}+t^2A^{(i)}_{3,l}+\cdots+t^{d-1}A^{(i)}_{d,l})-2A^{(i)}_{k,l}(i+1+t^2+\cdots+t^{2(d-1)})\\
	&:=P^{(i)}_{k,l}(t),
\end{align}
calculated at the square block at position $(k,l)$. Next, we multiply expression (\ref{equmU}) by $M(t')\otimes\1\otimes \cdots\otimes \1$ from the right-hand side with the choice $t'=t^{2d}$ and we get while considering the block at position $(k,l)$
\begin{align}
    \label{equmUn}
    \Big [(M(t)&\otimes\1\otimes \cdots \otimes \1)A^{(i)}(M(t^{2d})\otimes\1\otimes \cdots\otimes \1)\Big]_{k,l}\notag\\
    &= (t^{2d})^{l-1}\left[P^{(i)}_{k,1}(t)+t^{2d}P^{(i)}_{k,2}(t)+(t^{2d})^2P^{(i)}_{k,3}(t)+\cdots+(t^{2d})^{d-1}P^{(i)}_{k,d}(t)\right]\\
    &~~~~~~-2P^{(i)}_{k,l}(t)(i+1+(t^{2d})^2+\cdots+(t^{2d})^{2(d-1)}).\notag\\
    &:=Q^{(i)}_{k,l}(t)
\end{align}

This complicated expression is a polynomial function in $t$ with matrix coefficients. By this, we mean a function of the form $\sum_j B_j t^j$ where the coefficients $B_j$ are matrices. We can verify that the polynomial of equation (\ref{equmUn}) is non zero, {\it i.e.}, at least one of the matrix coefficient is non-zero. To see why this is the case, lets assume by contradiction that all coefficients of eq.~(\ref{equmUn}) are zero. Since the degree of all $P^{(i)}_{\alpha,\beta}(t)$ ($1\leq \alpha\leq d$ and $1\leq \beta\leq d$) is less than $2d$, by looking at the term of degree less that $2d$, it would imply that all the coefficients of $P^{(i)}_{k,l}(t)$ are zero. And then that the coefficients of all the $P^{(i)}_{k,\beta}(t)$ are also zero for $1\leq \beta\leq d$. Finally, by considering eq.~(\ref{equmU}), it would imply that all the $A^{(i)}_{\alpha,\beta}$ are also zero. This is clearly a contradiction, since we assumed that the matrix $A^{(i)}$ is non-zero. So for every $k$ and $l$, the polynomial with matrix coefficient $Q^{(i)}_{k,l}(t)$, is non zero. Such polynomials have at most a finite number of roots, and since the set $I$ is also finite, there must exist a value $t_1$ such that 
\begin{equation}
    \forall i\in I,~\forall 1\leq k\leq d,~\forall 1\leq l\leq d,~Q_{k,l}^{(i)}(t_1)\neq 0,
\end{equation}
{\it i.e.}, are not the zero matrix. We can thus write, with $t_1'=t_1^{2d}$:
\begin{equation}
    (M(t_1)\otimes\1\otimes \cdots \otimes \1)A^{(i)}(M(t_1')\otimes\1\otimes \cdots\otimes \1)=\left(\begin{matrix}
				Q^{(i)}_{1,1}(t_1)&\cdots&Q^{(i)}_{1,d}(t_1)\\
				\vdots&&\vdots\\
				Q^{(i)}_{d,1}(t_1)&\cdots&Q^{(i)}_{d,d}(t_1)
			\end{matrix}\right).
\end{equation}
Applying the induction hypothesis to the set $\Big\{Q^{(i)}_{j,k}(t_1)\Big|i\in I,~1\leq j\leq d,~1\leq k\leq d\Big\}$, which is a finite set of $d^{n-1}\times d^{n-1}$ non-zero matrices, we know that there exist $2(n-1)$ real parameters $t_2,\dots,t_n$ and $t_2',\dots,t_n'$ such that 
\begin{equation}
    \Big(M(t_2)\otimes \cdots\otimes M(t_n)\Big)Q^{(i)}_{j,k}(t_1)\Big(M(t_2')\otimes \cdots \otimes M(t_n')\Big):=\tilde{A}^{(i)}_{k,l}
\end{equation}
are matrices with only non-zero coefficients, for all values of $i$, $k$ and $l$. Since we have 
\begin{equation}
    \Big(\1\otimes M(t_2)\otimes \cdots\otimes M(t_n)\Big)\left(\begin{matrix}
				Q^{(i)}_{1,1}(t_1)&\cdots&Q^{(i)}_{1,d}(t_1)\\
				\vdots&&\vdots\\
				Q^{(i)}_{d,1}(t_1)&\cdots&Q^{(i)}_{d,d}(t_1)
			\end{matrix}\right)\Big(\1\otimes M(t_2')\otimes \cdots \otimes M(t_n')\Big)=\left(\begin{matrix}
				\tilde{A}^{(i)}_{1,1}&\cdots&\tilde{A}^{(i)}_{1,d}\\
				\vdots&&\vdots\\
				\tilde{A}^{(i)}_{d,1}&\cdots&\tilde{A}^{(i)}_{d,d}
			\end{matrix}\right),
\end{equation}
we finally have for each $i$,
\begin{equation}
    \Big(M(t_1)\otimes \cdots\otimes M(t_n)\Big)A^{(i)}\Big(M(t_1')\otimes \cdots \otimes M(t_n')\Big)=\left(\begin{matrix}
				\tilde{A}^{(i)}_{1,1}&\cdots&\tilde{A}^{(i)}_{1,d}\\
				\vdots&&\vdots\\
				\tilde{A}^{(i)}_{d,1}&\cdots&\tilde{A}^{(i)}_{d,d}
			\end{matrix}\right),
\end{equation}
which are indeed a matrices with only non-zero coefficients. This conclude the proof by induction. \\

The proof of the Lemma $\ref{lemma: hard}$ follow from the result shown by induction, by applying it to the set $\{U\}$ containing the single element $U$ (which is indeed a non-zero matrix, because it is invertible). This tells us that there exist some real number $t_1,\dots,t_n$ and $t'_1,\dots,t'_n$ such that 

\begin{equation}
    \Big(M(t_1)\otimes \cdots\otimes M(t_n)\Big)U\Big(M(t_1')\otimes \cdots \otimes M(t_n')\Big)
\end{equation}
is a matrix with only non-zero coefficients. Defining the unitary matrices
\begin{align}
    u_i=M(t_i)\Bigg/\sqrt{1+\left(\sum\limits_{s=0}^{d-1}t_i^{2s}\right)^2} && v_i=M(t_i')\Bigg/\sqrt{1+\left(\sum\limits_{s=0}^{d-1}t_i'^{2s}\right)^2}
\end{align}
concludes the proof.
\end{proof}

\section{Proof of proposition~\ref{pro: non entangling are trivial}}
\label{annex: proof prop 3}

\begin{proof}[\underline{Proof}]
In this proof, we use the two lemmas of the Appendix~\ref{appendix: lemma}. To ensure clarity, we will first explain the case of qubits and then generalize it for qudits , {\it i.e.}, a system of general dimension $d$.\\

\underline{Case $d=2$ and $N=2$}\\
We begin with the case $N=2$ and then proceed by induction. Let's consider  $U\in\mathcal{C}_2(U(2))$. Based on the Lemma~\ref{lemma: hard}, we know that we can compose $U$ with local gates so that we can assume without loos of generality that the coefficient of $U$ are all non-zero. We can then write $U$ in block form with each block being non-zero:
\begin{equation}
    U=\left(\begin{matrix}x_1&x_2&x_3&x_4\\y_1&y_2&y_3&y_4\end{matrix}\right)=\left(\begin{matrix}z_1&t_1\\z_2&t_2\\z_3&t_3\\z_4&t_4\end{matrix}\right).
\end{equation}

Since $\ket{00}=\left(\begin{matrix} 1&0&0&0\end{matrix}\right)^T$ is a product-state, $U\ket{00}=\left(\begin{matrix} x_1\\y_1\end{matrix}\right)$ is also a product-state. This implies that the blocks $x_1$ and $y_1$ are proportional to each other, and similarly, $x_2\propto y_2$, $x_3\propto y_3$ and $x_4\propto y_4$. By the same procedure with  $\ket{0}\otimes(\ket{0}+\ket{1})=\left(\begin{matrix} 1&1&0&0\end{matrix}\right)^T$ and $(\ket{0}+\ket{1})\otimes\ket{0}=\left(\begin{matrix} 1&0&1&0\end{matrix}\right)^T$ which are also product-states, we observe that $x_1+x_2\propto y_1+y_2$ and $x_1+x_3\propto y_1+y_3$.\\
	
We have seen that $U$ preserve the product ``\textit{bra}"-states (see the discussion after prop.~\ref{pro: non entangling}). Taking the hermitian conjugate thus shows that $U^{-1}=U^{\dagger}$ also preserve the product-state structure. This implies that $U$ cannot map an entangled state to a product one, otherwise, $U^{-1}$ would send a product-state to an entangled one. So, $\ket{01}+\ket{10}=\left(\begin{matrix} 0&1&1&0\end{matrix}\right)^T$ which is not a product-state, is mapped to an entangled one. This implies $x_2+x_3\not\propto y_2+y_3$. Since all the vectors involved are non-zero we have  $x_2\not\propto x_3$ otherwise it would imply $x_2+x_3\propto y_2+y_3$. In complete analogy, we have $x_1\not\propto x_4$. We get the same type of equation by considering the expressions $\bra{\psi}U$, for analogously chosen $\bra{\psi}$: $z_1\propto t_1$, $z_2\propto t_2$, $z_3\propto t_3$, $z_4\propto t_4$, $z_1+z_2\propto t_1+t_2$... as well as $z_1\not\propto z_4$ and $z_2\not\propto z_3$.\\
	
Without loss of generality, we can assume that we have $x_1\not\propto x_2$ and $x_3\not\propto x_4$. Indeed the previously shown proportionality relations imply that we have either $(x_1\not\propto x_2~\&~x_4\not\propto x_3)$ or $(x_1\not\propto x_3~\&~x_4\not\propto x_2)$. So, if necessary, we can multiply $U$ by the swapping gate $P=\left(\begin{matrix}1&0&0&0\\0&0&1&0\\0&1&0&0\\0&0&0&1	\end{matrix}\right)$, whose effect is to exchange the second and third columns of $U$, to have $x_1\not\propto x_2$ and $x_3\not\propto x_4$. The Lemma~\ref{lemma: simple} then implies proportionality relation between larger blocks $\left(\begin{matrix}x_1&x_2\end{matrix}\right)\propto\left(\begin{matrix}y_1&y_2\end{matrix}\right)$ and $\left(\begin{matrix}x_3&x_4\end{matrix}\right)\propto\left(\begin{matrix}y_3&y_4\end{matrix}\right)$.\\
	
Now, we want to apply the same procedure to the rows. We repeat the same process as before, but this time we cannot use a swapping gate. This is not a problem since, thanks to the proportionality relations between larger blocks, we know that $z_1\propto z_3$. This implies $z_1\not\propto z_2$ and $z_3\not\propto z_4$, and we can apply the Lemma~\ref{lemma: simple} again to get $\left(\begin{matrix}z_1\\z_2\end{matrix}\right)\propto\left(\begin{matrix}t_1\\t_2\end{matrix}\right)$ and $\left(\begin{matrix}z_3\\z_4\end{matrix}\right)\propto\left(\begin{matrix}t_3\\t_4\end{matrix}\right)$.\\
	
The proportionality relation obtained for the bigger blocks implies that $U$ can be written in the form of a tensor product: $U=V\otimes W$. Furthermore, since $U$ is unitary, we have $UU^{\dagger}=\1=(VV^\dagger)\otimes (WW^\dagger)$. This implies that $VV^\dagger\propto WW^\dagger\propto \1$, indicating that with the appropriate normalization factor, $V$ and $W$ can be chosen to be unitary.  Since we used the possibility of composition with a swapping gate to reach this form, we have obtained the result of the theorem for the case $N=2$ and $d=2$.\\
	
\underline{Case $d=2$ and $N>2$}\\
The case with $N$ qubits is similar. We will show that $U$ can be factored on the first qubit and then the proof follows by induction. We apply the same procedure by writing $U$
	\begin{equation}
		U=\left(\begin{matrix}
				A&B\\
				C&D
			\end{matrix}\right)
		=\left(\begin{matrix}
				x_1&\cdots&x_{2^N}\\
				y_1&\cdots&y_{2^N}
			\end{matrix}\right)
			=\left(\begin{matrix}
				z_1&t_1\\
				\vdots&\vdots\\
				z_{2^N}&t_{2^N}
			\end{matrix}\right),
	\end{equation}
where $A$, $B$, $C$, $D$ are square matrices of size $2^{N-1}$, the $x_i$, $y_i$ are length $2^{N-1}$ column vectors and $z_i$ and $t_i$ are length $2^{N-1}$ row vectors. Thanks to Lemma~\ref{lemma: hard}, we can still assume without loss of generality all the coefficients of $U$ are non zero, and thus that $x_i$, $y_i$, $z_i$ and $t_i$ are all non-zero blocks.  

In the following discussion, instead of referring to each column or row with an index $i$ running from $1$ to $2^N$, we will label them using the binary string associated to the corresponding basis state $\ket{i_1,\cdots,i_N}$. For instance, we will speak of the column labeled as $000\cdots 0100$, which would be the fourth column. We start by examining the columns, more specifically only the upper half, \textit{i.e.}, the $x_i$'s. By using the same methodology as before, we know that the following holds:
\begin{enumerate}[label=\Alph*)]
    \item If the strings $i$ and $j$ labeling two columns differ at exactly one position and if $x_i\not \propto x_j$, then $\left(\begin{matrix}x_i&x_j\end{matrix}\right)\propto\left(\begin{matrix}y_i&y_j\end{matrix}\right)$. 
        
    Indeed under these hypothesis the sum of basis states $\ket{i}+\ket{j}$ is a product-state, which then implies that $x_i+x_j\propto y_i+y_j$. We can apply lemma~\ref{lemma: simple} which allows to conclude that $\left(\begin{matrix}x_i&x_j\end{matrix}\right)\propto\left(\begin{matrix}y_i&y_j\end{matrix}\right)$. \label{point: A}
    \item If the strings $i$ and $j$ of two columns differ at two positions or more, then we know that $x_i\not\propto x_j$.

    To see why this is true, suppose the strings differ at two positions. The sum $\ket{i}+\ket{j}$ is an entangled state and its image  $\left(\begin{matrix}x_i+x_j\\y_i+y_j\end{matrix}\right)$ is thus also entangled. Now assume by contradiction that $x_i\propto x_j$. Then $\left(\begin{matrix}x_i\\x_j\end{matrix}\right)$ is a product-state, which can be written as $\left(\begin{matrix}x_i\\x_j\end{matrix}\right)=\ket{\psi_1}\otimes\cdots\otimes\ket{\psi_N}$. Consequently, $x_i$ must be the vector $a\ket{\psi_2}\otimes\cdots\otimes\ket{\psi_N}$. All the proportionality relation then imply that $x_j=b\ket{\psi_2}\otimes\cdots\otimes\ket{\psi_N}$, $y_i=c\ket{\psi_2}\otimes\cdots\otimes\ket{\psi_N}$ and $y_j=d\ket{\psi_2}\otimes\cdots\otimes\ket{\psi_N}$. Meaning that $\left(\begin{matrix}x_i+x_j\\y_i+y_j\end{matrix}\right)=(a+b,c+d)^T\otimes \ket{\psi_2}\otimes\cdots\otimes\ket{\psi_N}$, which is a contradiction since it is supposed to be an entangled state. \label{point: B}
\end{enumerate}

Now we begin the analysis with  $x_{0\cdots 0}$. Using the point~\ref{point: B} from above we know that,  $x_{0\cdots 01}\not\propto x_{10\cdots 0}$ and 
 thus $x_{0\cdots 0}$ has to be non-collinear to either $x_{0\cdots 01}$ or $x_{10\cdots 0}$. If necessary, we can multiply $U$ by a swapping gate that exchanges qubit $1$ and qubit $N$. This permutes the columns $0\cdots 01$ and $10\cdots 0$ so that, we can assume that $x_{0\cdots 0}\not\propto x_{0\cdots 01}$. Similarly, we can assume $x_{0\cdots 0}\not\propto x_{0\cdots 010}$ by using the same trick of multiply $U$ by the swapping gate that exchange qubit $1$ and qubit $N-1$. This would not affect the relation $x_{0\cdots 0}\not\propto x_{0\cdots 01}$ we just shown since the corresponding columns are invariant under this exchange of qubits. The argument continues in the same fashion: we have $x_{0\cdots 0}\not\propto x_i$, where the string $i$ has exactly one `$1$' that is not at the first position. The point~\ref{point: A} tells us that the proportionality coefficients $\lambda_i$ (such that  $x_i=\lambda_i y_i$) are all the same for $i=0\cdots 0$ or $i$ has exactly one  `$1$' that is not at the first position. We will continue with the same methodology propagating the fact that $\lambda_i$ is a constant for all string $i$ starting with `$0$'. Indeed, $x_{0\cdots 01}\not\propto x_{0\cdots 010}$ thus $x_{0\cdots 011}$ is non-proportional to at least $x_{0\cdots 01}$ or $x_{0\cdots 010}$ and we can apply the point~\ref{point: A}, to get $\lambda_{0\cdots 011}=\lambda_{0\cdots 0}$. This extension applies to all $x_i$ with $i$ having two `$1$' (not on the first qubit), then three, and so on, until $x_{01\cdots 1}$. In the end, we get that $A\propto C$.\\
	
We now apply the same reasoning for rows of $U$, except we do not have the flexibility of multiplying by some swapping gates. Nonetheless, this is not needed, because $z_{0\cdots 0}$ is non-proportional to at least $z_{0\cdots 01}$ or $z_{10\cdots 0}$ and we have just shown that $z_{0\cdots 0}\propto z_{10\cdots 0}$. Thus $A\propto B$. The same thing works for the lower half: $C\propto D$. The proportionality relation between the large block $A$, $B$, $C$ and $D$ finally show that $B\propto D$.\\
	
All these four proportionality relations between large blocks ($A\propto B$, $B\propto D$, $A\propto C$ and $C\propto D$) imply that $U$ can be written in the form of a tensor product: $U=V\otimes W$. Here $V$ is a $2\times 2$ matrix, and $W$ is $2^{N-1}\times 2^{N-1}$. Furthermore, $U$ is unitary, which means that we can chose $V$ and $W$ to be unitary as well. $W$ will also stabilize the set of product-states on $N-1$ qubit, thus we can apply the induction hypothesis to $W$. We used the possibility to compose $U$ with swap gates to arrive at this form but since for a local gate $u$ and a permutation gate $P$, $PuP^\dagger$ is local, we can ``move" the permutation gate to the left to arrive at the result of the theorem in the case $d=2$.\\
	  
\underline{Case $d\geq 1$ and $N\geq 1$}\\
For the case of $N$ qudits, the proof works in a similar way. We start by applying the Lemma~\ref{lemma: hard} and multiply $U$ by local gates, so that we  write
\begin{equation}\label{eq: decomp U general}
    U=\left(\begin{matrix}
    A_{1,1}&\cdots&A_{1,d}\\
    \vdots&&\vdots\\
    A_{d,1}&\cdots&A_{d,d}
    \end{matrix}\right)=\left(\begin{matrix}
      \alpha_{1,1}&\cdots&\alpha_{1,d^N}\\
    \vdots&&\vdots\\
    \alpha_{d,1}&\cdots&\alpha_{d,d^N}
    \end{matrix}\right),
\end{equation}
with all $\alpha$-blocks being non-zero column vectors of length $d$. Acting on the right of equation (\ref{eq: decomp U general}) by a basis state, it becomes clear that for all fixed $j$, $\alpha_{1,j},\alpha_{2,j},\dots, \alpha_{d,j}$ are proportional to each other. Let's denote the proportionality coefficients as $\lambda_{i,j}$, {\it i.e.}, $\alpha_{i,j}=\lambda_{i,j}\alpha_{1,j}$. Now, what we have to show is that $\lambda_{i,j}=\lambda_{i,0\cdots0}$ for all $i=1,\dots ,d$ and string $j$ which start with a `$0$'.\\

By using remarks~\ref{point: A} and~\ref{point: B} listed previously and generalized on qudits (replace $x_j$ by $\alpha_{1,j}$), we know that $\lambda_{i,j}=\lambda_{i,0\cdots0}$, for either  $j=0\cdots 01$ or $j=10\cdots 0$. As before, we can multiply $U$ by the swapping gate permuting qudit $1$ and $N$, such that we get $\lambda_{i,0\cdots01}=\lambda_{i,0\cdots0}$. Like in the qubit case we show that the swapping freedom allows us to assume $\alpha_{1,0\cdots0}\not\propto \alpha_{1,j}$ whenever $j$ is a string composed of only zeros and one $1$ which is not at the beginning. For such string $j$, we have that $\lambda_{i,j}=\lambda_{i,0\cdots0}$. \\

For the column $0\cdots 02$, we know $\alpha_{1,0\cdots 02}$ cannot be proportional to  $\alpha_{1,0\cdots 0}$ and $\alpha_{1,0\cdots 01}$  at the same time (since both are non-collinear). Thus by point~\ref{point: A}, we have $\lambda_{i,0\cdots 02}=\lambda_{i,0\cdots 0}$ or $\lambda_{i,0\cdots 01}$. But since we already showed $\lambda_{i,0\cdots 0}=\lambda_{i,0\cdots 01}$ we indeed manage to obtain $\lambda_{i,0\cdots 02}=\lambda_{i,0\cdots 0}$. The same holds ($\lambda_{i,j}=\lambda_{i,0\cdots 0}$) for all string $j$ only made of `$0$' and one non-zero index which is not at the first place.\\

We can then gradually extend this equality of the proportionality coefficients ($\lambda_{i,j}=\lambda_{i,0\cdots 0}$) to all columns with their label $j$ starting with `$0$'. In this case, we reason like for the qubit case: we compare the column $j$ we are interested in, with two other columns $k$ and $l$ for which we already know that $\lambda_{i, k} =\lambda_{i,l} =\lambda_{i,0\cdots0}$. We can choose the strings $l$ and $k$ such that: $k$ differ from $j$ at one position, $l$ differ form $j$ at one position and $k$ differ form $l$ at two positions. Application of point~\ref{point: B} then shows that $\alpha_{1,k}\not\propto\alpha_{1,l}$ and thus $\alpha_{1,j}$ is not proportional to $\alpha_{1,k}$ or $\alpha_{1,l}$. The point~\ref{point: A} allows then to conclude that $\lambda_{i,j} =\lambda_{i,0\cdots 0}$.\\

The last equality means that the $d^{N-1}\times d^{N-1}$ block $A_{i,1}$ ( $i=1,\dots, d$) are all proportional to each other: $A_{1,1}\propto A_{2,1}\propto\cdots\propto A_{d,1}$. We now continue the analysis on the rows. This time multiplying $U$ by swapping gates is not necessary, like in the case of qubits, because we already know that the row $0\cdots 0$ and $10\cdots 0$  are proportional. Thus, this procedures show that all the square blocks on the same line are proportional to each other, that is, for every $i$, $A_{i,1}\propto \cdots\propto A_{i,d}$. This, with the proportionality relation we showed for the $A_{i,1}$'s block, shows that in fact all the square blocks are proportional to each other, regardless of their positions. This means $U=V\otimes W$ where $V$ (of size $d\times d$) and $W$ (of size $d^{N-1}\times d^{N-1}$) can be chosen to be unitary. Additionally, since $W$ stabilizes product-states on $N-1$ qudits, we can conclude the proof by induction.

\end{proof}

\section{Proof of theorem~\ref{thm: eigenvalue of projectors}}
\label{annex: proof thm 6}
\begin{proof}[\underline{Proof}]
    Since the projector $P_{0,\cdots,0}$ is of rank $2^{n-1}$, we will have at least $2^{n-1}$ zeros as eigenvalues. To show that the states defined by the equation (\ref{equstate}) are indeed eigenvectors with their respective eigenvalues, we proceed with the following computation
    \begingroup
    \allowdisplaybreaks
    \begin{subequations}
        \begin{align}	
			P_{\theta_1,\cdots,\theta_n}\ket{\psi_{j_1,\cdots,j_n}}&=\frac{1}{2}\ket{\psi_{j_1,\cdots,j_n}}+\frac{\operatorname{Re}}{2}\Bigg[\sum\limits_{k_1,\cdots,k_n} (i j_1)^{k_1}\cdots(i j_n)^{k_n}A_{\theta_1}\ket{k_1}\cdots A_{\theta_n}\ket{k_n}\Bigg]\\
			&=\frac{1}{2}\ket{\psi_{j_1,\cdots,j_n}}+\frac{\operatorname{Re}}{2} \Bigg[\prod\limits_{l=1}^n(\cos\theta_l\ket{0}+\sin\theta_l\ket{1}-ij_l\cos\theta_l\ket{1}+ij_l\sin\theta_l\ket{0}\Bigg]\\
			&=\frac{1}{2}\ket{\psi_{j_1,\cdots,j_n}}+\frac{\operatorname{Re}}{2} \Bigg[\prod\limits_{l=1}^n(e^{ij_l\theta_l}\ket{0}-ij_l e^{ij_l\theta_l}\ket{1})\Bigg]\\
			&=\frac{1}{2}\ket{\psi_{j_1,\cdots,j_n}}+\frac{\operatorname{Re}}{2} \Bigg[e^{i(j_1\theta_1+\cdots+j_n\theta_n)}\sum\limits_{k_1,\cdots,k_n}(-ij_1)^{k_1}\cdots (-ij_n)^{k_n}\ket{k_1\cdots k_n}\Bigg]\\
			&=\frac{1}{2}\ket{\psi_{j_1,\cdots,j_n}}+\frac{1}{4} \Bigg[e^{i(j_1\theta_1+\cdots+j_n\theta_n)}\sum\limits_{k_1,\cdots,k_n}(-ij_1)^{k_1}\cdots (-ij_n)^{k_n}\ket{k_1\cdots k_n}\\
			&~~~~~~+e^{-i(j_1\theta_1+\cdots+j_n\theta_n)}\sum\limits_{k_1,\cdots,k_n}(ij_1)^{k_1}\cdots (ij_n)^{k_n}\ket{k_1\cdots k_n}\Bigg]\\
			&=\frac{1}{2}\ket{\psi_{j_1,\cdots,j_n}}+\frac{1}{4} \Big[\sum\limits_{k_1,\cdots,k_n=0,1}\Big(e^{i(j_1\theta_1+\cdots+j_n\theta_n)}(-ij_1)^{k_1}\cdots (-ij_n)^{k_n}\\
			&~~~~~~~~+e^{-i(j_1\theta_1+\cdots+j_n\theta_n)}(ij_1)^{k_1}\cdots (ij_n)^{k_n}\Big)\ket{k_1\cdots k_n}\Bigg].
        \end{align}
    \end{subequations}
\endgroup
	
We can examine the terms within the last summation symbol and divide our analysis into two cases based on the parity of the number of indices $k_l$ that have the value $1$.

\begin{itemize}
    \item For an even number of $k_l$'s taking the value $1$, the term in the sum is $ (ij_1)^{k_1}\cdots(ij_n)^{k_n} 2 \cos(j_1\theta_1+\cdots+j_n\theta_n)\ket{k_1\cdots k_n}$.  Summing over all such terms yields $2\cos(j_1\theta1+\cdots+j_n\theta_n)\ket{\psi_{j_1,\cdots,j_n}}$.
	\item For an odd number of $k_l$'s taking the value $1$, the term in the sum becomes $ (-ij_1)^{k_1}\cdots(-ij_n)^{k_n} 2i \sin(j_1\theta_1+\cdots+j_n\theta_n)\ket{k_1\cdots k_n}$. These terms are all eigenvectors of $\sigma_Z\otimes\cdots\otimes \sigma_Z$ and we denote their sum as $\ket{\varphi}$.
\end{itemize}
Therefore, we have
\begin{equation}\label{eq: one proj on state}
P_{\theta_1,\cdots,\theta_n}\ket{\psi_{j_1,\cdots,j_n}}=\frac{1}{2}\ket{\psi_{j_1,\cdots,j_n}}+\frac{\cos(j_1\theta_1+\cdots+j_n\theta_n)}{2}\ket{\psi_{j_1,\cdots,j_n}}+\ket{\varphi}.
\end{equation}
Since $\ket{\varphi}$ is a $-1$-eigenvector of $\sigma_Z\otimes\cdots\otimes \sigma_Z$, by applying $P_{0,\cdots,0}$ on the left-hand side of equation (\ref{eq: one proj on state}) will yield zero. Finally, we conclude that
\begin{equation}
    P_{0,\cdots,0}P_{\theta_1,\cdots,\theta_n}\ket{\psi_{j_1,\cdots,j_n}}=\frac{1+\cos(j_1\theta1+\cdots+j_n\theta_n)}{2}\ket{\psi_{j_1,\cdots,j_n}}.
	\end{equation}
\end{proof}

\section{3 qubit stabilization pattern analysis} \label{annex: 3 qubit analysis}
In this section of the appendix, we describe in details the analysis of the stabilization involving
$3$ qubits. Following the definitions of~\ref{section: methods}, we seek minimal and unique stabilization. To achieve this, we will need either $2$ or $3$ stabilizers as defined in equation (\ref{eq: def list stab}). A operator, denoted as $\mathcal{O} = A_1 \otimes \cdots \otimes A_n$, where $A_i \in \{A_{\theta,\phi}, \1_2\}$, cannot have a unique eigenvalue of +1. Consequently, a single operator alone cannot uniquely stabilize a state. On the other hand, the investigation of stabilization with two operators $\mathcal O_1$ and $\mathcal O_2$ reveals that stabilization involving four or more operators is never minimal. Therefore, in the following two subsections, we examine the cases of stabilization using either $2$ or $3$ operators.

\subsection{Stabilization with $2$ operators}
All non-equivalent stabilization patterns with $2$ operators on $3$ qubits are as follows. We analyze these patterns using the methods described in section~\ref{section: methods}. Our focus extends beyond the states stabilized by these operators; we also examine the dimensions of the corresponding $+1$ eigenspaces, as a function of the various parameters. This leads to a two-operators compatibility condition: if multiple operators stabilize a state, then, for every pair of operators, the shared $+1$ eigenspace must have a minimum dimension of $1$. The results are presented in the table~\ref{tab: two op}.\\


\mycomment{
\begin{enumerate}
	\item $\left\{\begin{array}{ll}
	\sigma_Z\otimes \sigma_Z \otimes \sigma_Z \\ 
	A_\theta\otimes A_\phi \otimes A_\omega
	\end{array} 
	\right.$, $dim =\left\{\begin{array}{ll}
	4: \theta=\phi=\omega= 0~[2\pi]\\ 
	2: (\theta=0 ~\&~ \phi=\omega ) \text{ or } (\theta=\pi ~\&~ \phi=\pi-\omega ) \text{ and permutations}\\
	1:\text{ only one of the equation } \theta\pm\phi\pm\omega=0~[2\pi], \text{ stabilize } \ket{\psi_{\pm 1,\pm 1,\pm 1}}\\
	0: \text{other}
	\end{array} 
	\right.$
	
	\item $\left\{\begin{array}{ll}
	\sigma_Z\otimes \sigma_Z \otimes \sigma_Z \\ 
	A_\theta\otimes A_\phi \otimes \1
	\end{array} 
	\right.$, $dim=\left\{\begin{array}{ll}
	4: \theta=\phi=0,\pi~[2\pi] \\ 
	2: \theta=\pm\phi~[2\pi] \\
	0: \text{other}
	\end{array}\right. $
	
	\item $\left\{\begin{array}{ll}
	\sigma_Z\otimes \sigma_Z \otimes \sigma_Z \\ 
	A_\theta\otimes \1 \otimes \1
	\end{array} 
	\right.$, $dim=\left\{\begin{array}{ll}
	2: \theta=0,\pi\\
	0: \text{other}
	\end{array}\right. $
	
	\item $\left\{\begin{array}{ll}
	\sigma_Z\otimes \sigma_Z \otimes \1 \\ 
	A_\theta\otimes A_\phi \otimes \1
	\end{array} 
	\right.$, $dim=\left\{\begin{array}{ll}
	4: \theta=\phi=0,\pi~[2\pi] \\ 
	2: \theta=\pm\phi~[2\pi] \\
	0: \text{other}
	\end{array}\right. $
	
	\item $\left\{\begin{array}{ll}
	\sigma_Z\otimes \sigma_Z \otimes \1 \\ 
	A_\theta\otimes \1 \otimes \sigma_Z
	\end{array} 
	\right.$, $dim=\left\{\begin{array}{ll}
	2: \theta=0,\pi\\
	0: \text{other}
	\end{array}\right. $
	
	\item $\left\{\begin{array}{ll}
	\sigma_Z\otimes \sigma_Z \otimes \1 \\ 
	A_\theta\otimes \1 \otimes \1
	\end{array} 
	\right.$, $dim=\left\{\begin{array}{ll}
	2: \theta=0,\pi\\
	0: \text{other}
	\end{array}\right. $
	
	\item $\left\{\begin{array}{ll}
	\sigma_Z\otimes \sigma_Z \otimes \1 \\ 
	\1 \otimes \1 \otimes \sigma_Z
	\end{array} 
	\right.$, $dim=2$
	
	\item $\left\{\begin{array}{ll}
	\sigma_Z\otimes \1 \otimes \1 \\ 
	A_\theta\otimes \1 \otimes \1
	\end{array} 
	\right.$, $dim=\left\{\begin{array}{ll}
	4: \theta=0\\
	0: \text{other}
	\end{array}\right. $
	
	\item $\left\{\begin{array}{ll}
	\sigma_Z\otimes \1 \otimes \1 \\ 
	\1\otimes \sigma_Z \otimes \1
	\end{array} 
	\right.$, $dim=2$
	
\end{enumerate}
}

\begin{table}[h]
\centering
\begingroup
\setlength{\tabcolsep}{6pt} 
\renewcommand{\arraystretch}{1.42} 

\begin{tabular}{|c|c|c|c|c|}
    \hline 
    No. & Operators & Dimension & Condition & Stabilizer state \\ 
    \hline \hline
    \multirow{4}{*}{\hypertarget{tab 2 1}{$1$}} & \multirow{4}{*}{
        $\begin{array}{ll}
            \sigma_Z\otimes \sigma_Z \otimes \sigma_Z \\ 
            A_\theta\otimes A_\phi \otimes A_\omega
	\end{array}$
    } & $4$  & $\theta=\phi=\omega= 0~[2\pi]$ & Non unique \\  
     &  & $2$  & $(\theta=0 ~\&~ \phi=\omega )$ or $(\theta=\pi ~\&~ \phi=\pi-\omega )$  and permutations & Non unique \\
     & & $1$  & Only one of the equation  $\theta\pm\phi\pm\omega=0~[2\pi]$ & $\ket{\psi_{\pm 1,\pm 1,\pm 1}}$ \\ 
     & & $0$  & Other & None \\ 
    \hline

    \multirow{3}{*}{\hypertarget{tab 2 2}{$2$}} & \multirow{3}{*}{
        $\begin{array}{ll}
            \sigma_Z\otimes \sigma_Z \otimes \sigma_Z \\ 
            A_\theta\otimes A_\phi \otimes \1
	\end{array}$
    } & $4$  & $\theta=\phi= 0,\pi~[2\pi]$ & Non unique \\ 
      & & $2$  & $\theta=\pm\phi$ & Non unique \\
      & & $0$  & Other & None \\ 
    \hline
    
    \multirow{2}{*}{\hypertarget{tab 2 3}{$3$}} &
    \multirow{2}{*}{
        $\begin{array}{ll}
            \sigma_Z\otimes \sigma_Z \otimes \sigma_Z \\ 
            A_\theta\otimes \1 \otimes \1
	\end{array}$
    } & $2$  & $\theta= 0,\pi~[2\pi]$ & Non unique \\ 
     & & $0$  & Other & None \\ 
    \hline

    \multirow{3}{*}{\hypertarget{tab 2 4}{$4$}} &
    \multirow{3}{*}{
        $\begin{array}{ll}
            \sigma_Z\otimes \sigma_Z \otimes \1 \\ 
            A_\theta\otimes A_\phi \otimes \1
	\end{array}$
    } & $4$  & $\theta=\phi= 0,\pi~[2\pi]$ & Non unique \\ 
      & & $2$  & $\theta=\pm\phi$ &  Non unique \\
       & & $0$  & Other & None \\ 
    \hline

    \multirow{2}{*}{\hypertarget{tab 2 5}{$5$}} &
    \multirow{2}{*}{
        $\begin{array}{ll}
            \sigma_Z\otimes \sigma_Z \otimes \1 \\ 
            A_\theta\otimes \1 \otimes \sigma_Z
	\end{array}$
    } & $2$  & $\theta= 0,\pi~[2\pi]$ & Non unique \\ 
       & & $0$  & Other & None \\ 
    \hline

    \multirow{2}{*}{\hypertarget{tab 2 6}{$6$}} &
    \multirow{2}{*}{
        $\begin{array}{ll}
            \sigma_Z\otimes \sigma_Z \otimes \1 \\ 
            A_\theta\otimes \1 \otimes \1
	\end{array}$
    } & $2$  & $\theta= 0,\pi~[2\pi]$ & Non unique \\ 
     & & $0$  & Other & None \\ 
    \hline

    \hypertarget{tab 2 7}{$7$} & $\begin{array}{ll}
        \sigma_Z\otimes \sigma_Z \otimes \1 \\ 
        \1\otimes \1 \otimes \sigma_Z
    \end{array}$
     & $2$  & None & Non unique\\ 
    \hline

    \multirow{2}{*}{\hypertarget{tab 2 8}{$8$}} &
    \multirow{2}{*}{
        $\begin{array}{ll}
            \sigma_Z\otimes \1 \otimes \1 \\ 
            A_\theta\otimes \1 \otimes \1
	\end{array}$
    } & $4$  & $\theta= 0~[2\pi]$ & Non unique \\ 
     & & $0$  & Other & None \\ 
    \hline

    \hypertarget{tab 2 9}{$9$} & $\begin{array}{ll}
        \sigma_Z\otimes \1 \otimes \1 \\ 
        \1\otimes \sigma_Z \otimes \1
    \end{array}$
     & $2$  & None & Non unique\\ 
    \hline
\end{tabular}
\endgroup
\caption{ Different ways two binary operators, constructed as tensor product of $\1$ and $A_\theta$ can stabilize subspaces of various dimension. The first column give the expression of the operators, the second give the list  of possible dimensions the stabilized subspace can have. The third, shows constraints on the set of free parameters that define the stabilizing operators, see equation (\ref{eq: def opera A theta phi}). The last column shows if a state is uniquely stabilized or not.}
\label{tab: two op}
\end{table}

From such analysis, we can observe the following facts
\begin{itemize}
    \item The only way to have unique stabilization with two operators, is when no tensor factor of the stabilizers are the identity operator $\1_2$ (correspond to the case number \hyperlink{tab 2 1}{$1$}). The corresponding state is $\ket{\psi_{\pm 1,\pm 1,\pm 1}}$ which is a Pauli stabilizer state, as shown section~\ref{section: binary case}.
    \item In all situations when two parameters $\theta$ and $\varphi$ are needed to define the stabilizers (case number \hyperlink{tab 2 2}{$2$} and \hyperlink{tab 2 4}{$4$}), the condition so that there is a two dimensional stabilized subspace is  $\theta=\pm\phi~[2\pi]$ and $\theta,\phi\neq 0,\pi$. This gives rise to a first instance of the two-operators compatibility rule.
    \item In a similar way, when only one parameter $\theta$ is needed to define the stabilizer (situation number \hyperlink{tab 2 3}{$3$}, \hyperlink{tab 2 5}{$5$}, \hyperlink{tab 2 6}{$6$} and \hyperlink{tab 2 8}{$8$}), the condition so that there is a two dimensional stabilized subspace is $\theta=0,\pi$. This correspond to another version of the two-operators compatibility rule.
    \item There is two situations (number \hyperlink{tab 2 7}{$7$} and \hyperlink{tab 2 9}{$9$}) where the stabilized subspace is of dimension $2$ space of dimension and no parameters are needed to define the stabilizers.
    \item Since we see that there is no situations, where the dimension of the stabilized subspace is $3$, it follows that we only need up to $3$ operators to do a unique stabilization on $3$ qubits. 
\end{itemize}

\subsection{Stabilization with 3 operators}
All non-equivalent  stabilization patterns are as follows. Here we have already discarded the situations where it is easy to see that unique stabilization is not possible, for example, due to too many identity operators used to construct the stabilizing operators. An example such trivial situation, is when the operators are $\sigma_Z\otimes\sigma_Z\otimes \1$, $A_\theta\otimes A_\phi\otimes \1$ and $A_{\psi,\alpha}\otimes A_{\kappa,\beta}\otimes \1$. In this case the last operators are all the identity $\1$, and thus either, there are no states stabilized or the dimension of the stabilized subspace is at least $2$. This kind of situation cannot uniquely stabilize a state and can safely be ignored. 

\begin{enumerate}
	\item $\left\{\begin{array}{ll}
	\sigma_Z\otimes \sigma_Z \otimes \sigma_Z \\ 
	A_{\theta_1}\otimes A_{\theta_2} \otimes A_{\theta_3}\\
	A_{\phi_1,\alpha}\otimes A_{\phi_2,\beta}\otimes A_{\phi_3,\gamma}
	\end{array} 
	\right.$.  The stabilized states are equivalent to $\ket{0}(\ket{00}+\ket{11})$ or $\ket{\psi_{\pm1,\pm1,\pm1}}$ (See details below). \label{case: 1}
	
	\item $\left\{\begin{array}{ll}
	\sigma_Z\otimes \sigma_Z \otimes \sigma_Z \\ 
	A_{\theta}\otimes A_{\phi} \otimes A_{\omega}\\
	A_{\psi,\alpha}\otimes A_{\kappa,\beta}\otimes \1
	\end{array} 
	\right.$. The stabilized states are equivalent to $\ket{0}(\ket{00}+\ket{11})$ (see details below).\label{case: 2}
	
	\item $\left\{\begin{array}{ll}
	\sigma_Z\otimes \sigma_Z \otimes \sigma_Z \\ 
	A_{\theta}\otimes A_{\phi} \otimes A_{\omega}\\
	A_{\psi,\alpha}\otimes \1\otimes \1
	\end{array} 
	\right.$. The stabilization is non-unique.
	
	\item $\left\{\begin{array}{ll}
	\sigma_Z\otimes \sigma_Z \otimes \sigma_Z \\ 
	A_{\theta}\otimes A_{\phi} \otimes \1\\
	A_{\omega,\alpha}\otimes A_{\psi,\beta}\otimes \1
	\end{array} 
	\right.$. The stabilized states are equivalent to $\ket{0}(\ket{00}+\ket{11})$.
	
	\item $\left\{\begin{array}{ll}
	\sigma_Z\otimes \sigma_Z \otimes \sigma_Z \\ 
	A_{\theta}\otimes A_{\phi} \otimes \1\\
	A_{\omega,\alpha}\otimes \1\otimes A_{\psi}
	\end{array} 
	\right.$. The stabilized states are equivalent to $\ket{000}$ or $\ket{000}+\ket{011}+\ket{101}+\ket{110}$.
	
	\item $\left\{\begin{array}{ll}
	\sigma_Z\otimes \sigma_Z \otimes \sigma_Z \\ 
	A_{\theta}\otimes A_{\phi} \otimes \1\\
	A_{\omega,\alpha}\otimes \1\otimes \1
	\end{array} 
	\right.$. Here $\theta=\pm \phi$, $\omega=0,\pi$ and the stabilizes states are equivalent to $\ket{000}$.
	
	\item $\left\{\begin{array}{ll}
	\sigma_Z\otimes \sigma_Z \otimes \sigma_Z \\ 
	A_{\theta}\otimes A_{\phi} \otimes \1\\
	\1\otimes\1\otimes A_\omega
	\end{array} 
	\right.$. Here $\theta=\pm \phi$, $\omega=0,\pi$ and the stabilized states are equivalent to $\ket{0}(\ket{00}+\ket{11})$.
	
	\item $\left\{\begin{array}{ll}
	\sigma_Z\otimes \sigma_Z \otimes \sigma_Z \\ 
	A_{\theta}\otimes \1 \otimes \1\\
	A_{\phi,\alpha}\otimes \1\otimes \1
	\end{array} 
	\right.$. The stabilization is non-unique.
	
	\item $\left\{\begin{array}{ll}
	\sigma_Z\otimes \sigma_Z \otimes \sigma_Z \\ 
	A_{\theta}\otimes \1 \otimes \1\\
	\1\otimes A_\phi\otimes \1
	\end{array} 
	\right.$. Here $\theta,\phi=0,\pi$ and the stabilized states are equivalent to $\ket{000}$.
	
	\item $\left\{\begin{array}{ll}
	\sigma_Z\otimes \sigma_Z \otimes \1 \\ 
	A_{\theta}\otimes A_\phi \otimes \1\\
	A_{\omega,\alpha}\otimes A_{\psi,\beta}\otimes \1
	\end{array} 
	\right.$. The stabilization is non-unique.
	
	\item $\left\{\begin{array}{ll}
	\sigma_Z\otimes \sigma_Z \otimes \1 \\ 
	A_{\theta}\otimes A_\phi \otimes \1\\
	A_{\omega,\alpha}\otimes \1\otimes \sigma_Z
	\end{array} 
	\right.$. The stabilization is non-unique.
	
	\item $\left\{\begin{array}{ll}
	\sigma_Z\otimes \sigma_Z \otimes \1 \\ 
	A_{\theta}\otimes A_\phi \otimes \1\\
	A_{\omega,\alpha}\otimes \1\otimes \1
	\end{array} 
	\right.$. The stabilization is non-unique.
	
	\item $\left\{\begin{array}{ll}
	\sigma_Z\otimes \sigma_Z \otimes \1 \\ 
	A_{\theta}\otimes A_\phi \otimes \1\\
	\1\otimes \1\otimes \sigma_Z
	\end{array} 
	\right.$. Here $\theta=\pm\phi$ and the stabilized states are equivalent to $(\ket{00}+\ket{11})\ket{0}$.
	
	\item $\left\{\begin{array}{ll}
	\sigma_Z\otimes \sigma_Z \otimes \1 \\ 
	A_{\theta}\otimes \1\otimes \sigma_Z\\
	\1\otimes A_\phi\otimes A_\omega
	\end{array} 
	\right.$. The stabilization is non-unique.
	
	\item $\left\{\begin{array}{ll}
	\sigma_Z\otimes \sigma_Z \otimes \1 \\ 
	A_{\theta}\otimes \1\otimes \sigma_Z\\
	A_{\phi,\alpha}\otimes \1\otimes \1
	\end{array} 
	\right.$. Here $\theta,\phi=0,\pi$ and the stabilized states are equivalent to $\ket{000}$.
	
	\item $\left\{\begin{array}{ll}
	\sigma_Z\otimes \sigma_Z \otimes \1 \\ 
	A_{\theta}\otimes \1\otimes \sigma_Z\\
	\1\otimes \1\otimes A_\phi
	\end{array} 
	\right.$. Here $\theta,\phi=0,\pi$ and the stabilized states are equivalent to $\ket{000}$.
	
	\item $\left\{\begin{array}{ll}
	\sigma_Z\otimes \sigma_Z \otimes \1 \\ 
	A_{\theta}\otimes \1\otimes \1\\
	A_{\phi,\alpha}\otimes \1\otimes \1
	\end{array} 
	\right.$. The stabilization is non-unique.
	
	\item $\left\{\begin{array}{ll}
	\sigma_Z\otimes \sigma_Z \otimes \1 \\ 
	A_{\theta}\otimes \1\otimes \1\\
	\1\otimes A_\phi\otimes \1
	\end{array} 
	\right.$. The stabilization is non-unique.
	
	\item $\left\{\begin{array}{ll}
	\sigma_Z\otimes \sigma_Z \otimes \1 \\ 
	A_{\theta}\otimes \1\otimes \1\\
	\1\otimes \1\otimes \sigma_Z
	\end{array} 
	\right.$. Here $\theta=0,\pi$ and the stabilized states are equivalent to $\ket{\psi}=\ket{000}$.
	
	\item $\left\{\begin{array}{ll}
	\sigma_Z\otimes \sigma_Z \otimes \1 \\ 
	\1\otimes \1\otimes \sigma_Z\\
	\1\otimes \1\otimes A_\theta
	\end{array} 
	\right.$. The stabilization is non-unique.
	
	\item $\left\{\begin{array}{ll}
	\sigma_Z\otimes \1 \otimes \1 \\ 
	\1\otimes \sigma_Z\otimes \1\\
	\1\otimes \1\otimes \sigma_Z
	\end{array} 
	\right.$. The stabilized states are equivalent to $\ket{\psi}=\ket{000}$.
\end{enumerate}

\mycomment{\begin{table}[H]
    \centering
    \begingroup
    \setlength{\tabcolsep}{6pt} 
    \renewcommand{\arraystretch}{1.5} 
    \begin{tabular}{|c|c|c|c|}
        \hline
        No. & Operators & Remark & Stabilizer states\\
        \hline
        1 &  $\left\{\begin{array}{ll}
	\sigma_Z\otimes \sigma_Z \otimes \sigma_Z \\ 
	A_{\theta_1}\otimes A_{\theta_2} \otimes A_{\theta_3}\\
	A_{\phi_1,\alpha}\otimes A_{\phi_2,\beta}\otimes A_{\phi_3,\gamma}
	\end{array} 
	\right.$ & See details below  and on table~\ref{tab} & $\ket{0}(\ket{00}+\ket{11})$ or $\ket{000}+\ket{011}+\ket{101}+\ket{110}$  \label{case: 1}\\
    \hline
    2 & $\left\{\begin{array}{ll}
	\sigma_Z\otimes \sigma_Z \otimes \sigma_Z \\ 
	A_{\theta}\otimes A_{\phi} \otimes A_{\omega}\\
	A_{\psi,\alpha}\otimes A_{\kappa,\beta}\otimes \1
	\end{array} 
	\right.$ & See details below & $\ket{0}(\ket{00}+\ket{11})$ \label{case: 2}\\
    \hline
    3 & $\left\{\begin{array}{ll}
	\sigma_Z\otimes \sigma_Z \otimes \sigma_Z \\ 
	A_{\theta}\otimes A_{\phi} \otimes A_{\omega}\\
	A_{\psi,\alpha}\otimes \1\otimes \1
	\end{array} 
	\right.$ & None & Non unique\\
    \hline
    4 & $\left\{\begin{array}{ll}
	\sigma_Z\otimes \sigma_Z \otimes \sigma_Z \\ 
	A_{\theta}\otimes A_{\phi} \otimes \1\\
	A_{\omega,\alpha}\otimes A_{\psi,\beta}\otimes \1
	\end{array} 
	\right.$ & None & $\ket{0}(\ket{00}+\ket{11})$\\
    \hline
    5 & $\left\{\begin{array}{ll}
	\sigma_Z\otimes \sigma_Z \otimes \sigma_Z \\ 
	A_{\theta}\otimes A_{\phi} \otimes \1\\
	A_{\omega,\alpha}\otimes \1\otimes A_{\psi}
	\end{array} 
	\right.$ & None & Product-state\\
    \hline
    6 & $\left\{\begin{array}{ll}
	\sigma_Z\otimes \sigma_Z \otimes \sigma_Z \\ 
	A_{\theta}\otimes A_{\phi} \otimes \1\\
	A_{\omega,\alpha}\otimes \1\otimes \1
	\end{array} 
	\right.$ & $\theta=\pm \phi$ and $\omega=0,\pi$, & Product-states\\
    \hline
    7 & $\left\{\begin{array}{ll}
	\sigma_Z\otimes \sigma_Z \otimes \sigma_Z \\ 
	A_{\theta}\otimes A_{\phi} \otimes \1\\
	\1\otimes\1\otimes A_\omega
	\end{array} 
	\right.$ & $\theta=\pm \phi$ and $\omega=0,\pi$ &   $\ket{0}(\ket{00}+\ket{11})$\\
    \hline
    8 & $\left\{\begin{array}{ll}
	\sigma_Z\otimes \sigma_Z \otimes \sigma_Z \\ 
	A_{\theta}\otimes \1 \otimes \1\\
	A_{\phi,\alpha}\otimes \1\otimes \1
	\end{array} 
	\right.$ & None &  Non unique\\
    \hline
    9 & $\left\{\begin{array}{ll}
	\sigma_Z\otimes \sigma_Z \otimes \sigma_Z \\ 
	A_{\theta}\otimes \1 \otimes \1\\
	\1\otimes A_\phi\otimes \1
	\end{array} 
	\right.$ &  $\theta,\phi=0,\pi$ & Product-states\\
    \hline
    10 & $\left\{\begin{array}{ll}
	\sigma_Z\otimes \sigma_Z \otimes \1 \\ 
	A_{\theta}\otimes A_\phi \otimes \1\\
	A_{\omega,\alpha}\otimes A_{\psi,\beta}\otimes \1
	\end{array} 
	\right.$ & None &  Non unique\\
    \hline
    11 & $\left\{\begin{array}{ll}
	\sigma_Z\otimes \sigma_Z \otimes \1 \\ 
	A_{\theta}\otimes A_\phi \otimes \1\\
	A_{\omega,\alpha}\otimes \1\otimes \sigma_Z
	\end{array} 
	\right.$ & None & Non unique\\
    \hline
    12 & $\left\{\begin{array}{ll}
	\sigma_Z\otimes \sigma_Z \otimes \1 \\ 
	A_{\theta}\otimes A_\phi \otimes \1\\
	A_{\omega,\alpha}\otimes \1\otimes \1
	\end{array} 
	\right.$ & None & Non unique\\
    \hline
    13 & $\left\{\begin{array}{ll}
	\sigma_Z\otimes \sigma_Z \otimes \1 \\ 
	A_{\theta}\otimes A_\phi \otimes \1\\
	\1\otimes \1\otimes \sigma_Z
	\end{array} 
	\right.$ & $\theta=\pm\phi$ & $(\ket{00}+\ket{11})\ket{0}$\\
    \hline
    14 & $\left\{\begin{array}{ll}
	\sigma_Z\otimes \sigma_Z \otimes \1 \\ 
	A_{\theta}\otimes \1\otimes \sigma_Z\\
	\1\otimes A_\phi\otimes A_\omega
	\end{array} 
	\right.$ & None & Non unique\\
    \hline
    15 & $\left\{\begin{array}{ll}
	\sigma_Z\otimes \sigma_Z \otimes \1 \\ 
	A_{\theta}\otimes \1\otimes \sigma_Z\\
	A_{\phi,\alpha}\otimes \1\otimes \1
	\end{array} 
	\right.$ & $\theta,\phi=0,\pi$ & Product-states\\
    \hline
    16 & $\left\{\begin{array}{ll}
	\sigma_Z\otimes \sigma_Z \otimes \1 \\ 
	A_{\theta}\otimes \1\otimes \sigma_Z\\
	\1\otimes \1\otimes A_\phi
	\end{array} 
	\right.$ & $\theta,\phi=0,\pi$ & $\ket{000}$, $\ket{011}$, $\ket{101}$, $\ket{110}$\\
    \hline
    17 & $\left\{\begin{array}{ll}
	\sigma_Z\otimes \sigma_Z \otimes \1 \\ 
	A_{\theta}\otimes \1\otimes \1\\
	A_{\phi,\alpha}\otimes \1\otimes \1
	\end{array} 
	\right.$ & None & Non unique\\
    \hline
    18 & $\left\{\begin{array}{ll}
	\sigma_Z\otimes \sigma_Z \otimes \1 \\ 
	A_{\theta}\otimes \1\otimes \1\\
	\1\otimes A_\phi\otimes \1
	\end{array} 
	\right.$ & None & Non unique\\
    \hline
    19 & $\left\{\begin{array}{ll}
	\sigma_Z\otimes \sigma_Z \otimes \1 \\ 
	A_{\theta}\otimes \1\otimes \1\\
	\1\otimes \1\otimes \sigma_Z
	\end{array} 
	\right.$ & $\theta=0,\pi$ & $\ket{\psi}=\ket{000},\ket{110}$\\
    \hline
    20 & $\left\{\begin{array}{ll}
	\sigma_Z\otimes \sigma_Z \otimes \1 \\ 
	\1\otimes \1\otimes \sigma_Z\\
	\1\otimes \1\otimes A_\theta
	\end{array} 
	\right.$ & None & Non unique\\
    \hline
    21 & $\left\{\begin{array}{ll}
	\sigma_Z\otimes \1 \otimes \1 \\ 
	\1\otimes \sigma_Z\otimes \1\\
	\1\otimes \1\otimes \sigma_Z
	\end{array} 
	\right.$ & None &  $\ket{\psi}=\ket{000}$\\
    \hline
    \end{tabular}
    \endgroup
    \caption{Caption...}
    \label{tab: stab 3 op}
\end{table}
}

Since certain cases cannot be obtained by directly applying the methods described in Section~\ref{section: methods}, we will now elaborate on some of the more complex sub-cases.\\

For the case~\ref{case: 1}, we can use the study of stabilization with two operators, and more specifically the case \hyperlink{tab 2 1}{$1$} of table~\ref{tab: two op}. By applying the obtained results to the different pairs of operators we get that  $\theta_1=0$ and $\theta_2=\theta_3$ or $\theta_1=\pi$ and $\theta_2=\pi-\theta_3$, or any permutations thereof (the same holds for $\phi_1$, $\phi_2$, $\phi_3$). We can thus divide the problem into different sub-cases and use a numerical solver to get the complete set of solutions, which are set forth in the table~\ref{tab}. \\

\begin{table}[H]
\centering
\begingroup

        \setlength{\tabcolsep}{6pt} 
        \renewcommand{\arraystretch}{1.5} 
\begin{tabular}{|c|c|c|c|}
\hline 
\multicolumn{2}{|c|}{Operators} & Conditions & stabilizer state \\ 
\hline 
\multirow{4}{*}{$\sigma_Z\otimes A_{\varphi,0}\otimes A_{\varphi,0}$} &$\sigma_Z\otimes A_{\omega,\beta}\otimes A_{\omega,\beta} $  & \multirow{6}{*}{$\varphi,\omega,\beta \neq 0,\pi$} & $\ket{101}-\ket{110}$ \\ 
\cline{2-2}\cline{4-4}
&$\sigma_Z\otimes A_{\omega,\beta}\otimes A_{\omega,-\beta} $  &  & $\ket{000}+\ket{011}$ \\

\cline{2-2}\cline{4-4}
& $-\sigma_Z\otimes A_{\omega,\beta}\otimes A_{\omega,\pi-\beta} $  & & $\ket{101}-\ket{110}$ \\ 
\cline{2-2}\cline{4-4}
& $-\sigma_Z\otimes A_{\omega,\beta}\otimes A_{\omega,\beta-\pi} $  &  & $\ket{000}+\ket{011}$ \\ 

\cline{1-2}\cline{4-4}
\multirow{2}{*}{$-\sigma_Z\otimes A_{\varphi,0}\otimes A_{\pi-\varphi,0}$}& $-\sigma_Z\otimes A_{\omega,\beta}\otimes A_{\omega,\beta} $  &  & $\ket{101}+\ket{110}$\\ 
\cline{2-2}\cline{4-4}
&$-\sigma_Z\otimes A_{\omega,\beta}\otimes A_{\omega,-\beta} $  &  & $\ket{000}-\ket{011}$\\ 
\hline

\multirow{4}{*}{$\sigma_Z\otimes A_{\varphi,0}\otimes A_{\varphi,0}$} &$A_{\omega,0}\otimes \sigma_Z \otimes A_{\omega,0} $  & \multirow{6}{*}{$\varphi,\omega \neq 0,\pi$} & $\ket{000}+\ket{011}+\ket{101}-\ket{110}$ \\ 
\cline{2-2}\cline{4-4}
&$A_{\omega,0}\otimes \sigma_Z \otimes A_{\omega,\pi} $  &  & $\ket{000}+\ket{011}-\ket{101}+\ket{110}$ \\

\cline{2-2}\cline{4-4}
& $A_{\omega,0}\otimes -\sigma_Z \otimes A_{\pi-\omega,0} $  & & $\ket{000}+\ket{011}-\ket{101}+\ket{110}$ \\ 
\cline{2-2}\cline{4-4}
& $A_{\omega,0}\otimes -\sigma_Z \otimes A_{\pi-\omega,\pi} $  &  & $\ket{000}-\ket{011}-\ket{101}-\ket{110}$ \\ 

\cline{1-2}\cline{4-4}
\multirow{2}{*}{$-\sigma_Z\otimes A_{\varphi,0}\otimes A_{\pi-\varphi,0}$}& $A_{\omega,0}\otimes -\sigma_Z \otimes A_{\pi-\omega,0} $  & & $\ket{000}-\ket{011}-\ket{101}-\ket{110}$ \\ 
\cline{2-2}\cline{4-4}
& $A_{\omega,0}\otimes -\sigma_Z \otimes A_{\pi-\omega,\pi} $  &  & $\ket{000}-\ket{011}+\ket{101}+\ket{110}$ \\  
\hline
\end{tabular}
\endgroup
\caption{ Different ways three binary operators $\sigma_Z\otimes\sigma_Z\otimes \sigma_Z$, $A_{\theta_1,\varphi_1}\otimes A_{\theta_2,\varphi_2}\otimes A_{\theta_3,\varphi_3}$ and $A_{\theta_4,\varphi_4}\otimes A_{\theta_5,\varphi_5}\otimes A_{\theta_6,\varphi_6}$ can stabilize a unique state on three qubits. The first two columns give the expression of the second and third operators. This table shows all possibilities up to the permutation of the qubits.}
\label{tab}
\end{table}
    
The analysis of the case~\ref{case: 2} is conducted as follows. We apply the $2$-operators compatibility condition to the first two stabilizers, which allows us split the study into two cases. In the first case, the stabilizing operators are $\sigma_Z\otimes \sigma_Z \otimes \sigma_Z$, $A_{\theta}\otimes A_{\theta} \otimes \sigma_Z$ and $A_{\omega,\alpha}\otimes A_{\omega,\pm\alpha}\otimes \1$. For the $+\alpha$ choice, the stabilization can be treated with the already presented tools. For the choice $-\alpha$, we can verify that $(\ket{00}+\ket{11})\ket{0}$ is stabilized for all values of $\alpha$. This means that the stabilization is either non-unique, or exclusively fixes this particular state. Ultimately, we conclude that whatever the sign in front of $\alpha$, a uniquely stabilized state is equivalent to $(\ket{00}+\ket{11})\ket{0}$.\\

 For the second case, the operators are $\sigma_Z\otimes \sigma_Z \otimes \sigma_Z $, $A_{\theta}\otimes \sigma_Z \otimes A_{\theta}$ and $A_{\omega,\alpha}\otimes A_{\omega}\otimes \1$. We can apply local rotations to get a equivalent stabilization given by $A_\theta\otimes A_\theta \otimes A_\theta$, $A_{-\theta}\otimes A_\theta \otimes A_{-\theta}$ and $A_{\omega,\pi/2}\otimes A_{\omega,\pi/2}\otimes \1$. Applying the determinant method (see section~\ref{section: methods}) reveals that stabilization is only possible when certain conditions are met. Indeed the determinant $\det(M^\dagger M)=0$ can be factored into multiple terms. Asking that the determinant vanish imply one of the following condition 
 \begin{enumerate}
     \item $\theta=0,\pi/2,\pi$ \label{case: eq var 1}
     \item $\omega=\pi/2$ \label{case: eq var 2}
     \item $25 \cos (2 \theta -2 \omega )-4 \cos (4 \theta -2 \omega )+25 \cos (2 (\theta +\omega ))-4 \cos (4 \theta +2 \omega )+146 \cos (2 \theta )+8
   \cos (4 \theta )+54 \cos (2 \omega )-250=0$\label{case: eq var 3}
 \end{enumerate}
 
 We need to discard the cases $\theta=0,\pi/2,\pi$ (case~\ref{case: eq var 1}), since it would result in the first two operators being the same. By further analysis, using trigonometric expansion and substituting $\cos(\theta)^2\mapsto x$ and $\cos(\omega)^2\mapsto y$, it can be shown that the complex term (case~\ref{case: eq var 3}) is zero only when $\theta=0,\pi$ and $\omega=0,\pi$, which we should also exclude. Therefore, the only remaining case is $\omega=\pi/2$ (case~\ref{case: eq var 2}). To analyze this case, we reverse the local rotation and find that the state $\ket{000}+\ket{011}+\ket{101}+\ket{110}$ is uniquely stabilized if $\theta\neq 0,\pi/2,\pi$.

\end{document}